\newtheorem{lemma}{Lemma}
\newtheorem{corollary}{Corollary}
\newtheorem{theorem}{Theorem}
\begin{document}
%
\title{Block Compressed Sensing Based Distributed Device Detection for M2M Communications}
%
%
%

\author[$^{\dagger}$]{Yunyan Chang}
\author[$^{\dagger \dagger}$]{Peter Jung}
\author[$^{\dagger}$]{Chan Zhou}
\author[$^{\dagger \dagger}$]{S\l awomir Sta\'nczak} 
\affil[$^{\dagger}$]{Huawei European Research Center, Riesstr. 25 C, 80992 Munich, Germany \authorcr
    Email: \{yunyanchang, chan.zhou\}@huawei.com}
\affil[$^{\dagger \dagger}$]{Fraunhofer Heinrich-Hertz-Institute and Technische Universit\"at Berlin, Berlin, Germany \authorcr
    Email: \{peter.jung, slawomir.stanczak\}@hhi.fraunhofer.de}
\maketitle

\begin{abstract}
In this work\footnote{This work has been presented in parts in \cite{CJZS16} at the 2016 IEEE International Conferences on Acoustics, Speech and Signal Processing (ICASSP 2016).}, we utilize the framework of \emph{compressed sensing} (CS) for distributed device detection and resource allocation in large-scale \emph{machine-to-machine} (M2M) communication networks. The devices deployed in the network are partitioned into clusters according to some pre-defined criteria, e.g., proximity or service type. Moreover, the devices in each cluster are assigned a unique signature of a particular design that can be used to indicate their active status to the network. 
The proposed scheme in this work mainly consists of two essential
steps: (i) The \emph{base station} (BS) detects the active clusters
and the number of active devices in each cluster using a novel
\emph{block sketching algorithm}, and then assigns a certain amount of resources accordingly. (ii) Each active device detects its ranking among all the active devices in its cluster using an enhanced greedy algorithm and accesses the corresponding resource for transmission based on the ranking.
The particular design for both full-duplex and half-duplex networks are investigated.
By exploiting the correlation in the device behaviors and the sparsity
in the activation pattern of the M2M devices, the device detection
problem is thus tackled as a CS support recovery procedure for a
particular binary block-sparse signal $x\in\mathbb{B}^N$ -- with block sparsity $K_B$ and in-block sparsity $K_I$ over block size $d$. 
Theoretical analysis shows that, using the proposed scheme, 
the activation pattern of the M2M
devices can be reliably reconstructed 
within an acquisition time of 
$\mathcal{O}(\max\{K_B\log N, K_BK_I\log d\})$,  
which achieves a better scaling and 
less computational complexity of $\mathcal{O}(N(K_I^2+\log N))$ compared with standard CS algorithms. Moreover, extensive simulations confirm the robustness of the proposed scheme in the detection process, especially in terms of higher detection probability and reduced access delay when compared with conventional schemes like LTE \emph{random access} (RA) procedure and classic cluster-based access approaches.
\end{abstract}

\begin{IEEEkeywords}
Compressed sensing, block sparse, device detection, M2M communications.
\end{IEEEkeywords}

%
\IEEEpeerreviewmaketitle

\section{Introduction}
\label{sec:intro}
%
%
%
%
Towards the next generation of mobile and wireless networks, \emph{machine-to machine} (M2M) communications \cite{BEH12} is expected to play a significant role and form the basis for the future \emph{Internet of Things} (IoT). M2M communications offers a wide range of applications providing various services \cite{DBK10}, such as the Smart Grid, the E-health system, \emph{Vehicle-to-Vehicle} (V2V) communications, and etc. However, with the rapid growth of the M2M market, the number of M2M devices deployed in a wireless communication network is anticipated to be tremendously increased in the coming years and can potentially exceed 50 billion by the year 2020 \cite{OBHM13}, thus posing significant challenges to current radio access networks.

In LTE and LTE-Advance \cite{LTE11}, once a transmission is triggered by some random event, the device initiates a \emph{random access} (RA) procedure by firstly sending a randomly chosen preamble to indicate its active status, followed by a resource allocation by the serving \emph{base station} (BS) based on the detection \cite{3GPP868}. According to the LTE specifications, each cell is assigned a pool of 64 Zadoff-Chu sequences \cite{3GPP321} as preambles. However, a collision will occur if two or more devices have randomly selected the same preamble. When the number of accessing devices becomes excessively large, the simultaneous access attempts will incur a high probability of collisions in the RA procedure. Hence, the network will easily become overloaded and congested, leading to high detection failure rate and large access delay. 

Apart from a huge number of M2M devices, the messages in M2M communication scenarios are in general highly correlated, e.g., due to proximity, the same service type, and etc \cite{TK12}. As a result, it is reasonable to partition the devices into a number of clusters, thus a balk of similar requests can be handled in a single shot. Moreover, since the M2M traffic is characterized by the sporadic communication among a huge number of devices \cite{DBZ12}, each device has a low probability of being active, thus exhibiting a certain level of sparsity in the device detection process. In this context, \emph{compressed sensing} (CS) \cite{CRT06}-\cite{Don06} is becoming a proper paradigm to deal with high-dimensional signals with a sparse representation, where signal acquisitions can be done in a significantly reduced sampling rate. Moreover, CS has been identified as one of the key drivers behind the efforts to cope with the tremendous scaling issues due to massive connectivity \cite{SJS13}\cite{WBSJ15}.

Motivated by the CS principles and recognizing the cluster-like behavior as well as the sparsity in the M2M traffic, the activation pattern of the M2M devices can be formulated as a particular block sparse signal -- with additional in-block structure -- in CS based applications. Thus the device detection process can be mapped into the recovery procedure of such a sparse signal. In addition, for large-scale networks like M2M communication networks, distributed resource allocation schemes \cite{SSBHU09}, where each device determines its resource allocation autonomously, are in general highly preferred due to their better scalability with the size of the network. Since the centralized control by the BS is not needed, the distributed schemes call for much less coordination and information exchange between the devices, thus the signaling overhead can be substantially reduced and adapted according to the size of the network.

\subsection{Main Contribution}
\label{sec:summary}
In this work, we investigate in a novel distributed device detection scheme of the network activation pattern based on CS techniques to facilitate efficient resource allocation strategies for large-scale M2M communication networks. The proposed scheme targets to cope with tremendous scaling issues aroused by the massive connectivity in M2M communications, especially for enhanced device detection probability and reduced access delay. In the proposed scheme, the M2M devices deployed in the network are partitioned into clusters in advance and devices in each cluster are assigned a unique signature of a particular design for their initial access. After acquisition of the transmitted signatures from the devices, the proposed scheme mainly conducts two essential steps:  

\begin{itemize}
\item (i) The BS detects the active clusters and the number of active devices in each cluster, and then assigns a certain amount of resources accordingly.
\item (ii) Each active device detects its ranking among the active devices in its cluster and accesses the corresponding resource assigned by the BS based on the ranking.
\end{itemize}

A particular signature design for the devices in each cluster is investigated to facilitate the detection process of step (i) at the BS side and step (ii) at the device side. Besides, a novel algorithm based on the Count-Sketch procedure \cite{CCF02}\cite{HB11} is developed for the realization of step (i) to cope with complexity issues due to the massive number of devices. And step (ii) is performed based on some conventional greedy algorithms such as \emph{Orthogonal Matching Pursuit} (OMP) \cite{TG07} and \emph{Iterative Hard Thresholding} (IHT) \cite{BD09} but with side information of the feedback from the BS to enhance the performance robustness and to reduce the computational complexity. 
Theoretical analysis presents a rigorous proof that the proposed scheme is able to efficiently reduce the signal acquisition time with much less computational complexity. In particular, the following theorem holds.

\begin{theorem}
\label{theorem_1}
    Suppose that the activation pattern of the M2M devices is modeled as an arbitrary block sparse binary signal $x\in\mathbb{B}^N$ with block sparsity $K_B$ and in-block sparsity $K_I$ over block size $d$, and the signatures transmitted by the M2M devices are modeled as the measurement matrix $A\in\mathbb{R}^{M\times N}$ following the structure designed in Section \ref{sec:algorithm_matrix}.
		By applying the block sketching algorithm for decoding at the BS and the modified OMP for decoding at the M2M devices, $x$ can be reliably reconstructed by the proposed distributed device detection scheme with overwhelming probability if the dimension of the transmitted signatures
		\begin{equation}
		    M=\mathcal{O}(\max\{K_B\log N, K_BK_I\log d\}),
		\end{equation}
		with computational complexity of $\mathcal{O}(N(K_I^2+\log N))$.
\end{theorem}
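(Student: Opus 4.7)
The plan is to split the sampling requirement $M$ into two additive (hence max-bounded) contributions that match the two stages of the scheme, and to prove each stage succeeds with high probability, so that a union bound over the two events still yields overwhelming probability. Concretely, I would write $M = M_1 + M_2$, where $M_1$ measurements are dedicated to the BS-side block sketching stage (step (i)) and $M_2$ measurements are dedicated to the device-side in-block recovery stage (step (ii)). The structure of the matrix $A$ from the Section on signature design presumably block-diagonalizes into a sketching part and a per-cluster sensing part, which justifies this split directly.

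For the first stage, I would invoke the standard Count-Sketch guarantee: with $\mathcal{O}(\log N)$ independent hash repetitions each hashing the $N/d$ blocks into $\mathcal{O}(K_B)$ buckets, a $K_B$-block-sparse signal can be identified exactly with probability $1 - N^{-c}$ for some constant $c>0$. The block-wise aggregation (each block's contribution being a sum of $K_I$ entries, i.e. a small integer bounded by $d$) is standard to handle: I would argue that the pairwise independence of hash functions gives a per-bucket collision probability of $\mathcal{O}(1/K_B)$, and taking a median of $\mathcal{O}(\log N)$ estimators drives the per-block failure probability to $N^{-c}$, whence a union bound over the $N/d$ blocks yields overall success. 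This yields $M_1 = \mathcal{O}(K_B \log N)$ and — crucially — decoding cost $\mathcal{O}(N \log N)$ from the hash-evaluation and median step.

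For the second stage, conditional on the BS having correctly returned the list of $K_B$ active clusters (with known in-block sparsity $K_I$ each, as cardinality has been sketched), a device only needs to recover a $K_I$-sparse binary pattern of length $d$ in its own block. Using the classical OMP recovery guarantee for Bernoulli/sub-Gaussian sensing matrices, $\mathcal{O}(K_I \log d)$ measurements per active block suffice with failure probability at most $d^{-c}$; aggregating across the $K_B$ active blocks contributes $M_2 = \mathcal{O}(K_B K_I \log d)$ and, by a second union bound, preserves overwhelming success probability. Combining $M_1$ and $M_2$ and absorbing the sum into a max gives the claimed $M=\mathcal{O}(\max\{K_B\log N, K_B K_I\log d\})$. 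The per-block OMP cost is $\mathcal{O}(K_I^2 d)$ (the dominant term being the $K_I$ iterations with an $\mathcal{O}(K_I d)$ inner-product update), so the total over $N/d$ blocks is $\mathcal{O}(N K_I^2)$; added to the $\mathcal{O}(N \log N)$ sketch-decoding cost this yields $\mathcal{O}(N(K_I^2 + \log N))$ as stated.

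The main obstacle I anticipate is justifying the first-stage guarantee rigorously under the particular structured signature matrix used here rather than under a pure random hashing model: the signatures must simultaneously support Count-Sketch-style linear aggregation across blocks and OMP-style isotropic sensing within blocks, and one has to make sure the two sub-matrices do not interfere (e.g.\ cross-bucket leakage from nonzero in-block coefficients being mis-assigned to a wrong cluster). I would handle this by relying on the explicit structural orthogonality or block-diagonality declared in Section \ref{sec:algorithm_matrix}, then verifying that the sketch measurements see each block only through its \emph{cardinality} $\|x_{(b)}\|_1$, which keeps the Count-Sketch analysis intact. The secondary difficulty is carrying the probabilistic guarantees through the two union bounds cleanly so that the constants stay overwhelming in $N$; this is a standard but delicate bookkeeping step rather than a conceptual one.
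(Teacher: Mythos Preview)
Your high-level decomposition into two stages—Count-Sketch for block support at the BS, then OMP for in-block support at devices—matches the paper, as does your complexity accounting ($\mathcal{O}(N\log N)$ for the sketch decoding and $\mathcal{O}(NK_I^2)$ for the OMP pass). However, there is a genuine structural gap: the measurement matrix $A$ of Section~\ref{sec:algorithm_matrix} does \emph{not} block-diagonalize into a sketching part and a separate per-cluster sensing part, and there is no separate budget $M_2$ of measurements dedicated to device-side OMP. The matrix $A$ is a single Count-Sketch-style matrix with $M = RT$ rows ($R$ buckets, $T$ repetitions), and every submatrix $A_{t,\ell}$ has its $d$ nonzero entries \emph{equal} on one row. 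Hence, within a block, all $d$ columns of $A$ are identical, and $A$ by itself cannot distinguish which devices inside a block are active. Your proposed resolution—``rely on the explicit structural orthogonality or block-diagonality''—therefore cannot work for this matrix.

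The paper handles this differently. Device-side recovery does not use $y = Ax$ but $y_D = AH_DH_B^{-1}x = \tilde{A}x$, where the diagonal channel factor $H_D H_B^{-1}$ breaks the in-block degeneracy and supplies the i.i.d.\ randomness OMP needs. Cross-block interference is removed not by orthogonality but by feedback: the BS detects, for each active cluster $\ell$, which of its $T$ nonzero measurement rows collided with another active cluster (the pattern $Q_\ell$) and broadcasts this; the device discards those rows. What remains is $T_I \geq T\bigl(1 - (K_B-1)/R\bigr)$ effective measurements, and since $R = \mathcal{O}(K_B)$ is already imposed by the first stage this is a constant fraction of $T$. Requiring $T_I = \mathcal{O}(K_I\log d)$ then forces $T = \mathcal{O}(K_I\log d)$. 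The $\max$ in the theorem arises not from summing two budgets $M_1+M_2$ but from taking $T = \max\{\mathcal{O}(\log N),\ \mathcal{O}(K_I\log d)\}$ against a common $R = \mathcal{O}(K_B)$. Your anticipated obstacle is exactly the right one; the resolution is channel randomization plus collision feedback, not block-diagonality.
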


For a random $K$-sparse signal of length $N$, $\mathcal{O}(K\log N)$ measurements are sufficient for robust signal recovery by conventional CS algorithms like \emph{Basis Pursuit} (BP) \cite{CR07} or OMP, and the computational complexity is of $\mathcal{O}(N^3)$ for BP \cite{CDS98} and $\mathcal{O}(NK^2)$ for OMP \cite{TG07}. Therefore, the proposed scheme requires much less measurements since $d\ll N$ and significantly reduced computational complexity, thus achieving a better scaling with the increasing network size. 
Moreover, extensive simulations confirm the robustness of the proposed scheme in the detection process, especially in terms of higher detection probability and reduced access delay when compared with conventional schemes like LTE RA procedure and classic cluster-based access approaches.

\subsection{Organization of the Paper}
\label{sec:organization}
The remainder of this paper is organized as follows. Section \ref{sec:background} provides an overview on prior work and known results regarding the target problem. The general introduction of the proposed distributed device detection schemes is presented in Section \ref{sec:scheme}, for both full-duplex and half-duplex networks respectively. Section \ref{sec:model} sets up a mathematical model for the proposed scheme, and the detailed detection algorithms are introduced in Section \ref{sec:algorithms}. Theoretical analysis on the performance guarantees for the proposed scheme are discussed in Section \ref{sec:theory}. In Section \ref{sec:simulation}, numerical results are presented and evaluated. Finally, Section \ref{sec:conclude} concludes the paper with some final remarks.

\subsection{Notational Remarks}
\label{sec:notation}
Throughout this work, uppercase and lowercase boldface letters denote
matrices and vectors, respectively. 
The superscript $(\cdot)^T$ indicates the transpose of a matrix or a vector, 
and the field of binary, real and complex numbers are denoted by $\mathbb{B}$, $\mathbb{R}$ and $\mathbb{C}$. The cardinality of a set is given by $|\cdot|$, and the $\ell_2$-norm is given by $||\cdot||$. Probability is denoted by $\mathbb{P}(\cdot)$ and $\mathbb{E}[\cdot]$ indicates the expectation operator. 
Furthermore, $\mathcal{O}$ denotes ``big-O'' according to Knuth's notation.
Unless otherwise stated, all logarithms are assumed to be to the base 2.

\section{Background}
\label{sec:background}

\subsection{Compressed Sensing}
\emph{Compressed sensing} (CS) has emerged as a new framework for efficient signal
acquisition and reconstruction by finding solutions to underdetermined
linear systems with sparsity priors or related compressibility properties. 
CS builds on the revelation that a signal having a sparse representation in one basis can be recovered from a small number of projections onto another basis that is incoherent with the first \cite{CRT06}-\cite{Don06}. The CS is a concrete protocol for sensing and compressing data simultaneously.

Let $x\in \mathbb{R}^N$ be an original signal to be recovered and
$\Theta\in\mathbb{R}^{N\times N}$ be a representing basis or
dictionary such that the signal can be expressed as
\begin{equation}
    x=\Theta \theta,
\end{equation}
where $\theta\in\mathbb{R}^N$ is the corresponding coefficient vector. If all except for $K$ entries of $\theta$ are zero, i.e. 
$\lVert\theta\rVert_0:=|\{i:\theta_i\neq 0\}|\leq K$, then the signal $x\in\mathbb{R}^N$ is called to be $K$-sparse.
In this case the main objective is to recover $x$ from $M$ compressive
(undersampled, that is $M\ll N$) measurements $y\in\mathbb{R}^M$ given by
\begin{equation}
    y=\Phi x,
\end{equation}
where $\Phi\in\mathbb{R}^{M\times N}$ is usually called the measurement matrix.

Several sufficient conditions on the measurement matrix $\Phi$ are known to ensure that a $K$-sparse signal $x$ can be
accurately reconstructed from the compressive measurements $y$. A
widely considered condition is the \emph{restricted isometry
  property} (RIP) \cite{CT05} stating that there should exist a 
constant $\delta_K\in[0,1)$ such that 
$(1-\delta_K)||x||^2_2 \leq ||\Phi x||^2_2 \leq (1+\delta_K)||x||^2_2$
holds for all $K$--sparse vectors $x$. RIP ensures that $K$-sparse
vectors are mapped almost--isometrically by $\Phi$. 
However, it is a well-known fact that RIP is only sufficient but not
necessary for uniform recovery of sparse signals. Overcoming this
issue, the \emph{nullspace property} (NSP) provides both
sufficient and necessary conditions on $\Phi$ for robust sparse signal
recovery, which basically requires that any vector in the kernel space
of $\Phi$ is far from being sparse (see, e.g., the book \cite{FR13}
for details and additional references on RIP and NSP). Random matrices 
with \emph{independent and identically distributed} (i.i.d.)
(sub--) Gaussian \cite{CD06} and $\pm1$--Bernoulli \cite{CT06} distributed
entries follow the RIP and NSP condition \emph{with high probability}
(w.h.p.) in the $(K,M,N)$--regime:
\begin{equation}
    M\geq cK\log (N/K),
\end{equation}
where $c$ is some constant (see exemplary \cite{CW08} for an overview on these results).

Several algorithms have been proposed for reconstruction of the original signal $x$ from the undersampled measurements $y$. The canonical approach \cite{CDS98}\cite{CT04} uses linear programming to solve the $\ell_1$ minimization problem
\begin{equation}
    \hat{\theta}=\arg \min_{\theta} ||\theta||_1 \quad \text{subject to} \quad y=\Phi x=\Phi\Theta\theta.
\end{equation}

This problem can be solved in polynomial time but the computational
complexity of $\mathcal{O}(N^3)$ \cite{CDS98} renders it impractical
for many applications. Additional methods have been proposed involving
greedy pursuit methods such as OMP \cite{TG07} and IHT \cite{BD09},
which build up a signal approximation iteratively by making locally
optimal decisions, and bring the computational complexity down to the
order of $\mathcal{O}(K^2N)$.  In some practical applications 
even this reduced complexity can be a bottleneck as it is the case in our
setting and then fast sketching algorithms need to be considered.

\subsection{Distributed Resource Allocation}
In large-scale networks, it is preferred that the algorithms for resource allocation are amenable to distributed implementation \cite{SSBHU09} .
Distributed resource allocation schemes allow each transmitter to determine its local decisions autonomously without the need for global coordination, which introduces an inherent robustness to communication failures and environmental uncertainties. Therefore, distributed schemes call for significantly less information exchange between wireless devices. As a result, the signaling overhead can be substantially reduced and adapted according to the network size. Therefore, distributed schemes in general achieve better scalability with the number of devices in a network, which is an attractive property for large-scale networks.

Network resource allocation is often modeled as a utility maximization problem \cite{PC06}-\cite{SWB09}, which is flexible to accommodate a wide range of performance metrics through an appropriate assignment of utility functions. 
A number of gradient/subgradient methods \cite{TPC06} have been proposed to solve such problems in a distributed way. However, these methods are sensitive to the choice of stepsizes, leading to slow and unstable convergence. In \cite{SWB09}, some distributed Newton-like algorithms \cite{BJKJ15} are introduced that obtain high tolerance of errors in the stepsize calculation. There also exist faster numerical convex problem solvers, e.g., the nonlinear Jacobi algorithm \cite{BT89}, which converges fast under certain contraction conditions. Furthermore, pricing algorithms in distributed systems have also been widely approached with game theory \cite{JT04}\cite{MMHBR09}, since those scenarios fit naturally within the game theoretic frameworks.

\subsection{Related Works}
There are many approaches to the problem of device detection in large-scale M2M communication networks based on the CS theory. 

Apart from the LTE RA procedure \cite{LTE11}, \cite{3GPP321}, \cite{CEA11} and \cite{JPL10} proposed some cluster-based approaches to manage the M2M traffic in the uplink direction. According to these approaches, wireless devices are partitioned into clusters and a cluster head is selected for each cluster to handle the uplink traffic from the cluster members. The cluster heads process and fusion the requests from their cluster members, and then forward the accumulated messages periodically to the serving BS. However, this approach calls for efficient interaction between the cluster head and the cluster members, which leads to large signaling overhead and excessive processing delay.

In \cite{MLH09}, Meng \emph{et.al} simply adopted the detection algorithms from the probabilistic Bayesian framework for the sparse event detection in \emph{wireless sensor networks} (WSN). In \cite{SD11}, the authors applied greedy CS detection algorithms for jointly decoding the multi-user activity and data in \emph{Code Division Multiple Access} (CDMA) systems. It is shown that an application of the CS methods can greatly reduce the length of the spreading sequences, thereby leading to a higher network throughput. 
However, both of the schemes directly applied the standard CS decoding algorithms for device detection at the central controller, which ignores the correlation in the behavior among the M2M devices and the additional sparsity in the activation pattern, thus leading to non-optimality in the processing especially in terms of efficiency and computational complexity.

The authors of \cite{LT10} proposed a sparse signal recovery scheme via decentralized in-network processing for event detection in WDN based on a consensus optimization formulation. However, since a random sleeping strategy is used to enforce the compressive data collection, the detection field for each active time slot is rather uncertain and may lead to incorrect detection decisions with high probability. Moreover, none of the schemes exploits the cluster-like behavior among the devices, which exhibits a certain level of correlation in the device activations.


\section{Proposed Scheme Description}
\label{sec:scheme}

Standard CS decoding algorithms need to be appropriately adjusted in order to facilitate the specific sparsity feature in the activation patterns of the M2M devices for more efficiency in the device detection and resource allocation process. Furthermore, the detection scheme should be implemented with low computational complexity and distributed processing possibility, since the M2M devices are ordinarily configured with limited processing capability and battery power. In this section, we introduce an efficient mechanism for distributed device detection and resource allocation for large-scale M2M communication networks, which fully exploits the specific sparsity feature in the activation pattern of the M2M devices.

Since the M2M devices in general exhibit highly correlated behaviors, e.g., due to proximity, the same service type, and etc, it is reasonable to partition the devices into a number of clusters, thus a balk of similar requests can be handled in a single shot. And a device can be assorted into multiple clusters based on different features. The cluster structures are known both at the BS and at the devices during the device registration process to the network and updated periodically.
Besides, both the BS and each device in the network have the knowledge of channel information from other devices to itself, which can be possibly obtained via statistical channel knowledge \cite{ZZJZG15}, location-based estimation \cite{WQDCSC15}, channel reciprocity \cite{KJGK10} or long-term observation \cite{SRBF01}.

Furthermore, a ranking among the devices in each cluster is conducted in advance in order to determine the order for the active devices to access a pool of assigned resource blocks by the BS. As a result, the active devices access the corresponding resources according to their ranking in the cluster. The devices can be ranked according to some pre-defined rule, e.g., based on the order of the device ID or the service priority. And the ranking information is also informed to all the M2M devices in advance by the BS after the device registration.

In addition, since the M2M traffic exhibits a certain level of sparsity in the detection activity, each device has a low probability of being active at a certain time slot. In general, we observe that at a given time instant, only several clusters become active and only a small number of devices in those active clusters are triggered to access the network, where a cluster is called to be ``active'' if one or more devices from the cluster are active. Thus, a twofold sparsity pattern, namely the \emph{block sparsity} and \emph{in-block sparsity}, can be defined to model the active status of the devices.
\begin{itemize}
\item \textbf{Block Sparsity:} Only several of the cluster become active at a certain time instant.
\item \textbf{In-block Sparsity:} Only several devices from the same cluster are active.
\end{itemize} 

In our proposed scheme, the devices in each of the clusters are assigned a unique signature used to indicate their active status to the network, the design of which is different from the preambles (Zadoff-Chu sequences) in the LTE RA procedure and will be introduced in details in Section \ref{sec:algorithm_matrix}. Each of the signatures indicates the membership to a particular cluster, and is informed to the devices in the corresponding clusters in advance.

A full-duplex system allows data to be transmitted in both communication directions at the same time, which means the signals can be transmitted and received simultaneously, while half-duplex systems only allow communication in a single direction at a time. Usually the terms ``full-duplex'' and ``half-duplex'' are used to describe point-to-point systems, whereas in this work we extend them to multi-point systems composed of multiple connected entities that communicate with each other. Therefore in this work, ``full-duplex'' means an M2M device can both transmit and receive data on the wireless channel simultaneously. Depending on the network configuration whether the full-duplex mode is supported by the devices or not, the proposed scheme for distributed device detection and resource allocation in large-scale M2M networks is differentiated into the following two cases.

\subsection{Full-Duplex M2M Network}
\label{sec:full-duplex}
If the communication network is configured to support the full-duplex mode, the proposed scheme mainly consists of four phases, which are illustrated in Figure \ref{fig:phases}.

    \begin{figure}[htb]
				\centering
					\includegraphics[width=5.0in]{./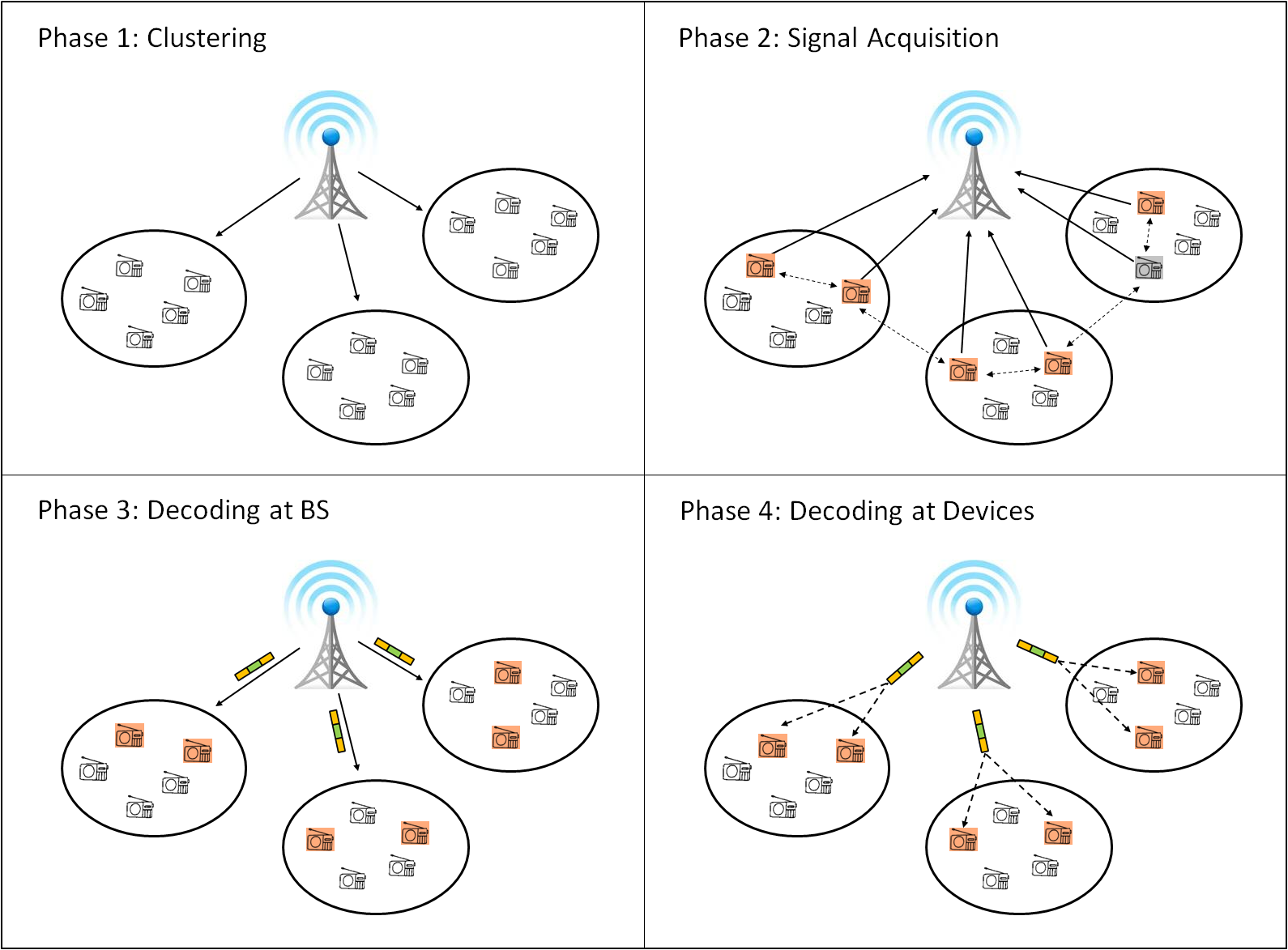}
        \caption{Major steps for the proposed scheme in full-duplex mode}
        \label{fig:phases}
    \end{figure}

\begin{itemize}
\item \textbf{Phase 1a: Clustering} As the devices are partitioned into clusters and the devices in each cluster is designed to obtain a unique signature, all these relevant information is informed to the devices and updated periodically by the BS. The information includes the cluster IDs, the devices in each cluster, the ranking information of the devices, and the uniquely-designed signatures distributed to each of the clusters.

\item \textbf{Phase 2a: Signal Acquisition} Once the devices become active and start accessing the network, they transmit simultaneously the assigned signatures to indicate their active status to the network. With full-duplex transceivers, all the devices and the BS receive individual linear combinations of the transmitted signatures.

\item \textbf{Phase 3a: Decoding at BS} The BS detects the active clusters, the number of active devices in each cluster, as well as the collision patterns in the received measurements. Then it broadcasts this information to the devices and assigns a certain amount of resources to each of the detected active clusters accordingly.

\item \textbf{Phase 4a: Decoding at Devices} Each active device performs device detection for its corresponding cluster using its received signal and the broadcast information from the BS in Phase 3. Then it detects its ranking among all the active devices in the cluster and accesses the corresponding resource block assigned by the BS for transmission. 
\end{itemize}

\subsection{Half-Duplex M2M Network}
\label{sec:half-duplex}
If the communication network is configured as a half-duplex system, then we introduce a cluster head for each of the clusters in advance. The cluster heads are selected with the property not to transmit concurrently with the other devices in the same cluster. Besides, the cluster heads are supposed to know the cluster structures as well as the estimated channel information from the cluster members to themselves. To this end, the proposed scheme under the half-duplex assumption performs the procedures as follows.

\begin{itemize}
\item \textbf{Phase 1b: Clustering} The cluster information including the cluster heads and the information involved in Phase 1a under the full-duplex assumption is informed to the devices by the BS during the initial device registration to the network. 

\item \textbf{Phase 2b: Signal Acquisition} After the active devices transmit their individual signatures to indicate their active status, both the BS and the cluster heads collect their own measurements, which are linear combinations of the transmitted signatures.

\item \textbf{Phase 3b: Decoding at BS} Same as Phase 3a under the full-duplex assumption.

\item \textbf{Phase 4b: Decoding at Cluster Heads} The cluster heads of the active clusters perform device detection for their corresponding clusters and detect the ranking of the active devices in their clusters using the same decoding algorithms as proposed for Phase 4a under the full-duplex assumption.

\item \textbf{Phase 5b: Resource Access} Each cluster head broadcasts the detected ordering information of the active devices to the members in the cluster, and those active devices access the corresponding resource blocks for transmission based on their ranking afterwards.
\end{itemize}


The message flows involved in the proposed schemes under both full-duplex and half-duplex assumptions are illustrated in Figure \ref{fig:flow}.

    \begin{figure}[htb]
				\centering
					\includegraphics[width=5.0in]{./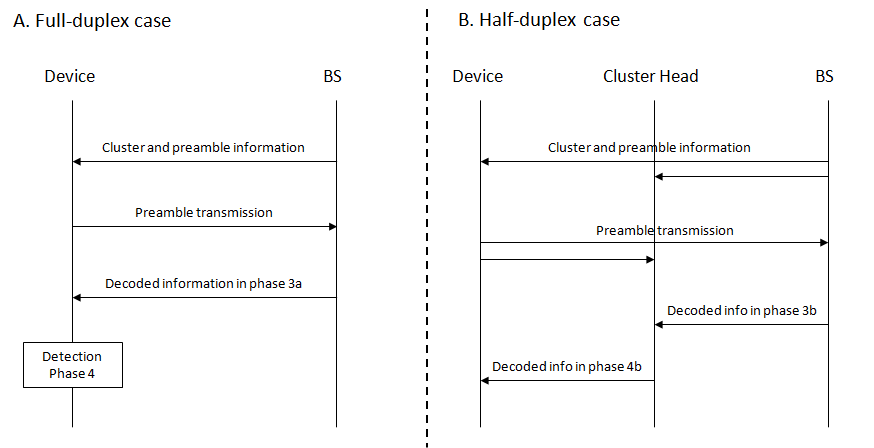}
        \caption{Message flow charts under both full-duplex and half-duplex assumption}
        \label{fig:flow}
    \end{figure}

\section{System Model}
\label{sec:model}
In this section, we set up the mathematical model to describe the target problem and elaborate on the proposed scheme introduced in Section \ref{sec:scheme}. 

\subsection{Transmitter Side}
\label{sec:tx_side}
Consider an M2M network with $N$ devices, which are partitioned in advance into $L$ clusters of equal size $d$ according to some pre-defined criteria. As defined in Section \ref{sec:scheme}, the twofold sparsity pattern of the M2M devices, namely the \emph{block-sparsity} and \emph{in-block-sparsity}, are $K_B$ and $K_I$, respectively. That is, only $K_B$ out of $L$ clusters are active, and the number of active devices in each cluster is at most $K_I$. 
In addition, we denote $\mathcal{S}_B$ as the \emph{block support}, which is defined to be the set of indexes of the active clusters. Similarly, the \emph{in-block support}, denoted as $\mathcal{S}_{I,\ell}$, indicates the set of indices of the active devices in cluster $\ell$. Since the activation pattern of the M2M devices is $K_B$ block sparse and $K_I$ in-block sparse, we have $|\mathcal{S}_B|\leq K_B$ and $|\mathcal{S}_{I,\ell}|\leq K_I$ for all $\ell\in\{1,\cdots,L\}$.Therefore, the total number of active devices in the network is $K\leq K_BK_I$. Due to the sparse nature of the event occurrence in MTC, we have $K\leq K_BK_I\ll N=Ld$. 

Herein, we denote a $K$-sparse binary sequence $x\in\mathbb{B}^N$ to model the activation pattern of the M2M devices in the network, with entry ``1'' indicating the corresponding device to be active and ``0'' otherwise. Furthermore, we denote $x_{\ell}\in\mathbb{B}^d,\ell\in\{1,\cdots,L\}$ as the status vector for cluster $\ell$. Thus, the activation pattern of the devices is formulated as a particular block sparse signal -- with additional in-block structure -- in the CS based applications.

We apply the CS theory to the transmission incurred by the M2M devices in the network. To this end, let $A\in \mathbb{R}^{M\times N}$ be the measurement matrix whose exact structure is defined later in Section \ref{sec:algorithm_matrix}. Each column of $A$, say column $i$ denoted by $a_i, i\in\{1,\cdots,N\}$, corresponds to the unique signature sent by the M2M device indexed by $i$ if it is active. Besides, we denote $A_{-,\ell}\in \mathbb{R}^{M\times d}$ as a submatrix of $A$ corresponding to the signatures sent by the devices from the $\ell$-th cluster.

\subsection{Receiver Side}
\label{sec:rv_side}
The signatures transmitted by the active devices get superimposed in the wireless channel, and the signal $y\in\mathbb{C}^M$ received by the BS is a linear combination of the transmitted signatures, which is given by
\begin{equation}
    y=AH_Bx+\epsilon,
\label{eq_1}
\end{equation}
where $H_B\in\mathbb{C}^{N\times N}$ is the diagonal channel matrix, and $\epsilon\in\mathbb{C}^M$ is the thermal noise vector having random, zero-mean components of variance $\sigma^2$. 

In the meantime, each active device (under full-duplex assumption) or the cluster head (under half-duplex assumption) also collects a linear combination of the transmitted signatures,where the received signal $y_D\in\mathbb{C}^M$ at a particular device is given by
\begin{equation}
    y_D=AH_Dx+\epsilon_D,
\label{eq_2}
\end{equation}
where $H_D$ is an ${N\times N}$ diagonal matrix representing the wireless channels from the M2M devices to this device, and $\epsilon_D\in\mathbb{C}^M$ is the noise vector with zero mean and variance of $\sigma^2$. 

The cluster structures are assumed to be known both at the BS side and at all the M2M devices. Furthermore, all the nodes in the network have the channel information from other nodes to itself as well as to the BS, which can be possibly obtained via statistical channel knowledge, location-based estimation, channel reciprocity or long-term observation. With these knowledge at hand, we apply the CS related techniques in the proposed distributed device detection and resource allocation scheme to reconstruct the $K$-sparse signal $x$ from the received signal $y$ at the BS and $y_D$ at the devices.

\subsection{Problem Formulation}
\label{sec:problem}
    \begin{figure}[htb]
				\centering
					\includegraphics[width=5.5in]{./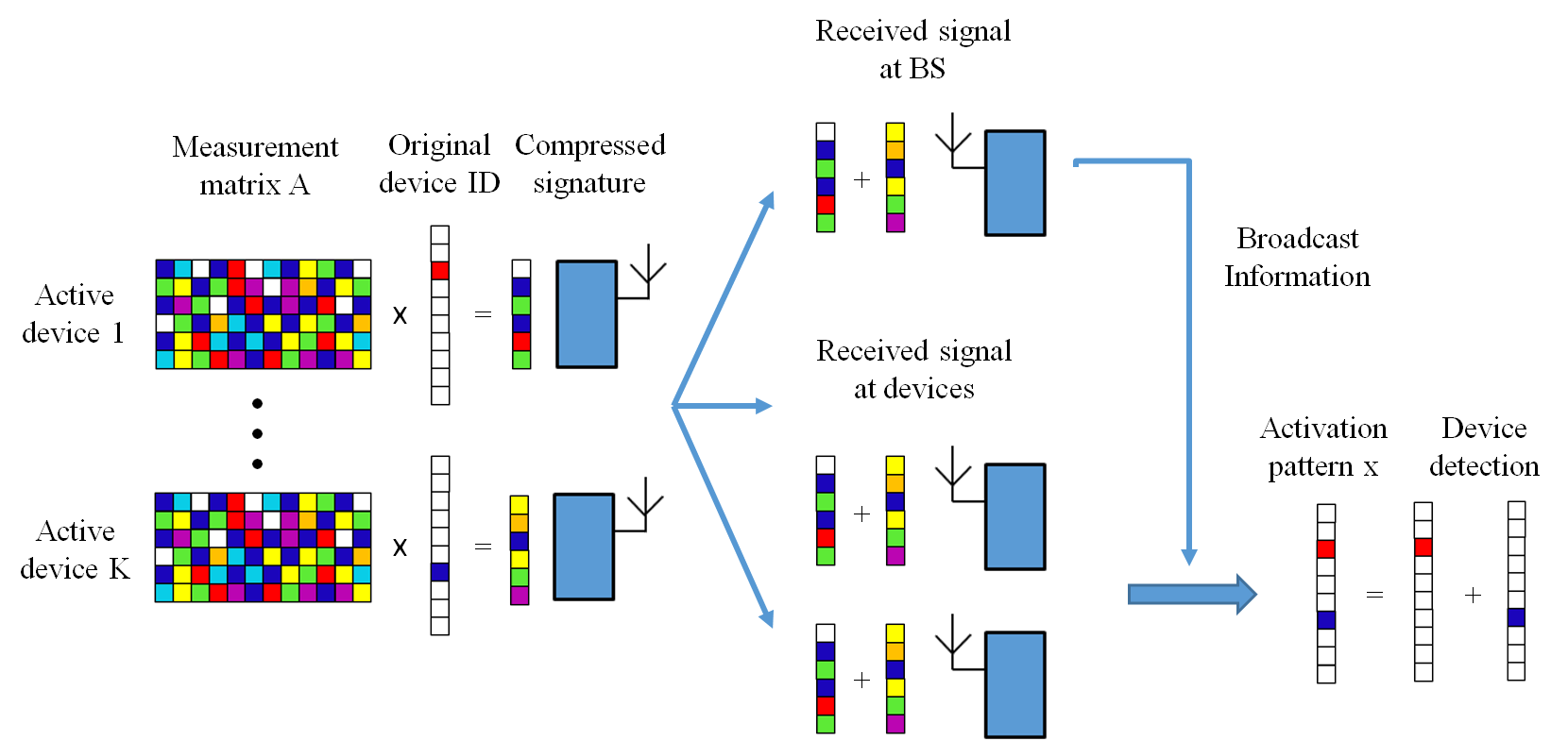}
        \caption{Target problem formulation}
        \label{fig:problem}
    \end{figure}

As discussed in Section \ref{sec:scheme}, the transmission scenario and the target problem can be illustrated as in Figure \ref{fig:problem}.
First, each of the active devices transmits a specially designed signature to the network to indicate its active status, which corresponds to a certain column of the measurement matrix $A$. The signatures transmitted by the active devices get superimposed in the wireless channel, and both the BS and the devices collect a linear combination of the transmitted signatures. 
On one hand, the BS has to detect the active clusters as well as the number of active devices in each cluster, then it broadcasts this information to the devices and assigns a certain amount of resource blocks to the detected clusters accordingly; and on the other hand, each active device needs to perform the device detection in a distributed manner for its corresponding cluster with its own collected measurements and the feedback information from the BS, and then accesses the assigned resource blocks by the BS based on its detected ranking among all the active devices in the cluster.

Herein, mapping to the mathematical model, the object of interest would be at first to perform the \emph{block support recovery} at the BS in order to obtain an accurate estimate of $\mathcal{S}_B$ and $|\mathcal{S}_{I,\ell}|$ for all $\ell\in\{1,\cdots,L\}$, and thereafter, to perform the \emph{in-block support recovery} at each active device from cluster $\ell$ (under full-duplex assumption) or at the cluster head (under half-duplex assumption) in order to obtain an accurate estimate of the in-block support $\mathcal{S}_{I,\ell}$ . 

To be specific, the challenges we need to tackle in this work can be summarized as:
\begin{itemize}
\item \emph{Challenge 1:} How to obtain an accurate estimate of $\mathcal{S}_B$ and $|\mathcal{S}_{I,\ell}|$ for all $\ell\in\{1,\cdots,L\}$ at the BS?
\item \emph{Challenge 2:} How to obtain an accurate estimate of $\mathcal{S}_{I,\ell}$ at the device side for cluster $\ell$?
\end{itemize}

\section{Algorithm Design}
\label{sec:algorithms}
In this section, we introduce in details about the proposed algorithms in order to tackle the two challenges posed in Section \ref{sec:problem}, particularly for the structured signature model, and the decoding procedure at the BS as well as at the M2M devices, respectively.

\subsection{Structured Random Signature Model}
\label{sec:algorithm_matrix}
The measurement matrix $A\in \mathbb{R}^{M\times N}$ that we design here is a structured random matrix, which is an extension of those utilized by the Count-Sketch procedure proposed in \cite{CCF02}\cite{HB11} due to its preferable advantage of low computational complexity. We denote by $\mathcal{A}(R,T,L,d,\alpha)$ a particular distribution over matrices having $RT$ rows and $Ld$ columns, and we assume that the measurement matrix $A$ is drawn from this distribution, i.e., $A\sim\mathcal{A}(R,T,L,d,\alpha)$ with $M=RT, N=Ld$.
		
As illustrated in Figure \ref{fig:matrix}, the measurement matrix $A$ is composed of the vertical concatenation of $T$ individual random matrices, denoted $A_{t,-}\in\mathbb{R}^{R\times N}$ for $t\in\{1,\cdots,T\}$. Meanwhile, each $A_{t,-}$ consists the horizontal concatenation of $L$ sub-matrices $A_{t,\ell}\in\mathbb{R}^{R\times d}$ for $\ell\in\{1,\cdots,L\}$. Each $A_{t,\ell}$ is a sparse matrix containing exactly $d$ non-zero components - located on the same row and with the same value. The index of the row with non-zero elements is chosen uniformly at random from the set $\{1,2,\cdots,R\}$, and the non-zero component takes the value of $\pm\alpha$ with probability $1/2$. For a given realization of $A_{t,\ell}$, let $q_{t,\ell}\in\{1,\cdots,R\}$ denote the index of the row of $A_{t,\ell}$ with non-zero entries, and $s_{t,\ell}\in\{-\alpha,+\alpha\}$ be the corresponding value of the non-zero components in $A_{t,\ell}$.

To this end, each of the signatures transmitted by the M2M devices, which is the corresponding column of the structured measurement matrix $A$, is a sparse sequence of length $M$ with sparsity level $T$.

    \begin{figure}[htb]
				\centering
					\includegraphics[width=5.6in]{./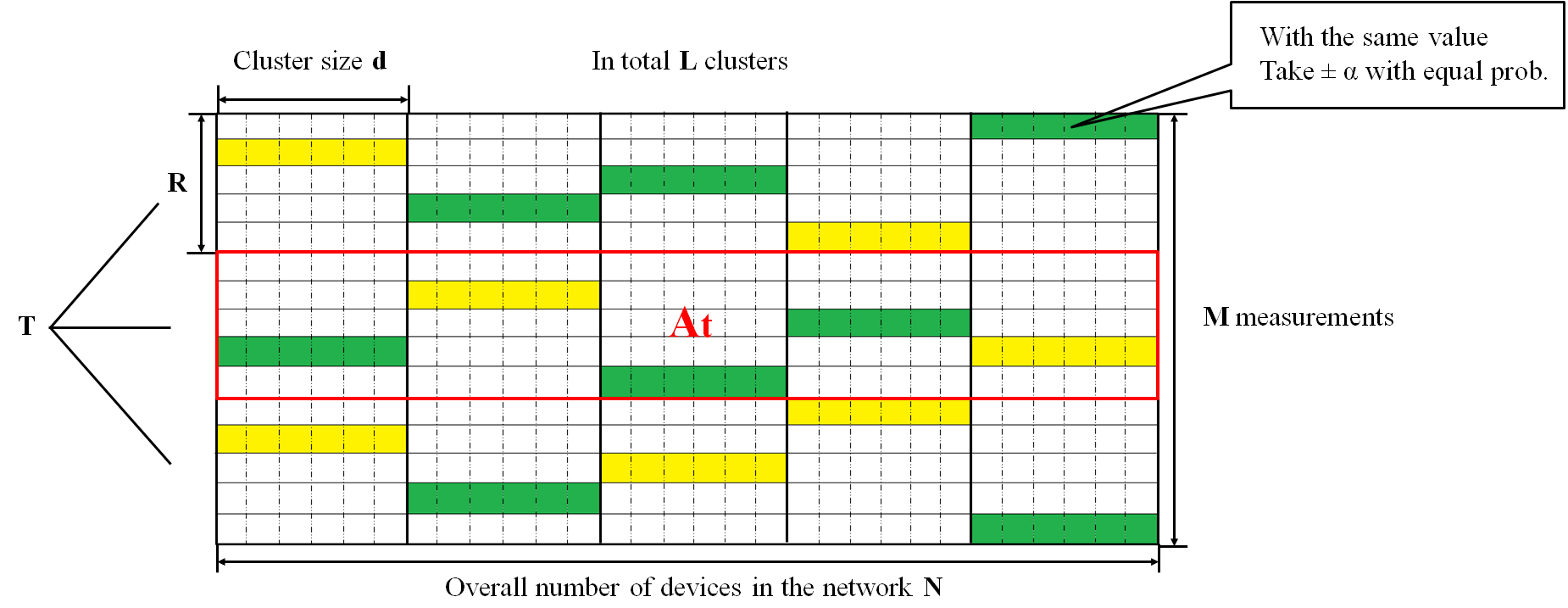}
        \caption{The structure of the measurement matrix $A$}
        \label{fig:matrix}
    \end{figure}

\subsection{Block Support Recovery at BS}
\label{sec:algorithm_BS}
The decoding procedure at the BS side aims to tackle the \emph{Challenge 1} raised in Section \ref{sec:problem}, which is to obtain an accurate estimation of the block support $\mathcal{S}_B$ and the cardinality of the in-block support $|\mathcal{S}_{I,\ell}|$ for all each active cluster $\ell\in\{1,\cdots,L\}$.

The signal $y$ received by the BS at some given time instant is given by (\ref{eq_1}). 
Since the channel knowledge $H_B$ is assumed to be known at the M2M
devices, some pre-channel correction can be performed before the
corresponding signatures are transmitted by the active
devices. Particular variants of (generalized) inverses of 
the channel matrix may be taken at the transmitter side. 
For example, for weak links with too small channel magnitudes due to the
near-far behavior it is more reasonable for the device to stay offline
and avoid excessively large transmit power and strong interference to
other nodes. In this case $H_B$ is always invertible since the
smallest entry is perpetually above certain threshold
and then the obtained measurements at the BS are given as
\begin{equation}
    y = AH_BH_B^{-1}x + \epsilon = Ax+\epsilon.
\label{equ_y}
\end{equation}
A different approach for this detection phase is presented by one of the authors in
\cite{KJ16}. There only channel phases and not the amplitudes of the
individual channels are corrected and, interestingly, the resulting 
\emph{nonnegative and sparse} recovery problem can be solved also
efficiently using nonnegative least--squares without sparsity--promoting regularizers.
Here, we develop a fast block sketching algorithm based on the Count-Sketch procedure proposed in
\cite{CCF02}\cite{HB11} to realize the decoding process at the BS,
which is implemented as follows. We denote $y_t\in\mathbb{R}^R$ as the subvector of $y$ corresponding to observations obtained via the submatrix $A_{t,-}$, such that
\begin{equation}
    y_t=A_{t,-}x+\epsilon_t, \quad \text{for} \quad t\in\{1,\cdots,T\}.
\end{equation}

Then for each $t\in\{1,\cdots,T\}$, we form the signal estimates $\tilde{x}_t\in\mathbb{R}^N$ by indexing and scaling the entries of the corresponding observations $y_t$, where 
\begin{equation}
    \tilde{x}_t=A_{t,-}^Ty_t, \quad \text{for} \quad t\in\{1,\cdots,T\}.
\end{equation}

Recall that each $A_{t,-}$ consists the horizontal concatenation of $L$ submatrices $A_{t,\ell}$ for $\ell\in\{1,\cdots,L\}$, which are sparse matrices containing $d$ non-zero components located on the same row indexed by $q_{t,\ell}\in\{1,\cdots,R\}$ and with the same value $s_{t,\ell}\in\{-\alpha, +\alpha\}$. As a result, an individual entry of $\tilde{x}_t$ indexed by $i$ which belongs to the $\ell$-th block can be expressed by 
\begin{equation}
    \tilde{x}_{t,i}=s_{t,\ell}y_{t,\ell}, \quad \text{for} \quad i\in\{d\ell-d+1,\cdots,d\ell\}, \ell\in\{1,\cdots,L\},
\label{eq:theory_x0}
\end{equation}
where $y_{t,\ell}$ is the corresponding entry of $y_{t}$ indexed by $q_{t,\ell}$, which is given as
\begin{equation}
  \begin{aligned}
    y_{t,\ell}&=\sum_{\substack{k\in\mathcal{S}_{t,\ell}\\ \mathcal{S}_{t,\ell}=\{j=\{1,\cdots,L\}:q_{t,j}=q_{t,\ell}\}}}\sum_{i=dk-d+1}^{dk}s_{t,k}x_i+\epsilon_{t,\ell}\\
		&=\sum_{i=d\ell-d+1}^{d\ell}s_{t,\ell}x_i+\sum_{k\in\mathcal{S}_{t,\ell}\backslash\{\ell\}}\sum_{i=dk-d+1}^{dk}s_{t,k}x_i+\epsilon_{t,\ell}\\
		&\overset{(a)}{=}s_{t,\ell}\left(\sum_{i=d\ell-d+1}^{d\ell}x_i\right)+\sum_{k\in\mathcal{S}_{t,\ell}\backslash\{\ell\}}s_{t,k}\left(\sum_{i=dk-d+1}^{dk}x_i\right)+\epsilon_{t,\ell}\\
		&\overset{(b)}{=}s_{t,\ell}|\mathcal{S}_{I,\ell}|+\sum_{k\in\mathcal{S}_{t,\ell}\backslash\{\ell\}}s_{t,k}|\mathcal{S}_{I,k}|+\epsilon_{t,\ell}\\
		&\overset{(c)}{=}s_{t,\ell}|\mathcal{S}_{I,\ell}|+\sum_{k\in\mathcal{S}_B\cap\mathcal{S}_{t,\ell}\backslash\{\ell\}}s_{t,k}|\mathcal{S}_{I,k}|+\epsilon_{t,\ell},
	\end{aligned}
\label{eq:theory_y1}
\end{equation}
where (a) follows from the structure of $A_{t,\ell}$ with equal non-zero elements on the same row, (b) follows since $x_i\in\{0,1\}$ is drawn from a binary ensemble, and (c) follows due to the cardinality of the in-block support $|\mathcal{S}_{I,k}|=0$ for non-active blocks.

Taking $y_{t,\ell}$ in (\ref{eq:theory_y1}) back to (\ref{eq:theory_x0}), we can get
\begin{equation}
    \begin{aligned}
		    \tilde{x}_{t,i}&=s_{t,\ell}\left(s_{t,\ell}|\mathcal{S}_{I,\ell}|+\sum_{k\in\mathcal{S}_B\cap\mathcal{S}_{t,\ell}\backslash\{\ell\}}s_{t,k}|\mathcal{S}_{I,k}|+\epsilon_{t,\ell}\right)\\
				&=\alpha^2|\mathcal{S}_{I,\ell}|+\underbrace{\sum_{k\in\mathcal{S}_B\cap\mathcal{S}_{t,\ell}\backslash\{\ell\}}s_{t,\ell}s_{t,k}|\mathcal{S}_{I,k}|}_{\Delta_{t,\ell}}+\underbrace{s_{t,\ell}\epsilon_{t,\ell}}_{\epsilon_{t,\ell}},\\
		\end{aligned}
\label{eq:theory_x1}
\end{equation}
where $\Delta_{t,\ell}$ is the interference factor from other active blocks, and $\epsilon_{t,\ell}$ is the noise factor with zero mean and variance $\gamma^2$.

Thereafter, in order to mitigate the interference from other blocks on the estimates, we form a block-wise estimate $\bar{x}_{\ell}$ for each $\ell\in\{1,2,\cdots,L\}$, which is obtained as
\begin{equation}
     \bar{x}_{\ell} = \text{median} \{\tilde{x}_{t,i}\}_{t=1,i=d\ell-d+1}^{T,d\ell}, \quad \text{for} \quad \ell\in\{1,2,\cdots,L\}.
\label{eq:theory_x3}
\end{equation}

In other words, the estimate $\bar{x}_{\ell}$ for each block $\ell\in\{1,\cdots,L\}$ is obtained as the median of $\tilde{x}_{t,i}$ over $\mathcal{O}(Td)$ samples. The rational for forming the signal estimate using the median instead of the mean is mainly due to its robustness against outliers, where large value elements in the data stream may spoil some subsets of the estimate during the calculation of the mean.

It will be proven later in Section \ref{sec:theory_BS} that by taking the median value block-wisely among all individual estimations as in (\ref{eq:theory_x3}), each of the estimates $\bar{x}_{\ell}$ for the $\ell$-th block corresponds to $|{\mathcal{S}}_{I,\ell}|$ with overwhelming probability. To this end, the ultimate goal is achieved for block support recovery at the BS. Furthermore, the cardinality of the in-block support set $|{\mathcal{S}}_{I,\ell}|$ can be obtained as
\begin{equation}
    |{\mathcal{S}}_{I,\ell}|=\left[\frac{1}{{\alpha}^2}\cdot\bar{x}_{\ell}\right], \quad \text{for} \quad \ell\in\{1,2,\cdots,L\}.
\end{equation}

Therefore, since $|{\mathcal{S}}_{I,\ell}|$ indicates the number of active devices in cluster $\ell\in\{1,2,\cdots,L\}$, those clusters with $|{\mathcal{S}}_{I,\ell}|>0$ are marked as ``active'' and detected by the BS.

Besides, for a given $x_i$ from an active block $\ell\in\mathcal{S}_B$, if an individual estimate $\tilde{x}_{t,i}$ in (\ref{eq:theory_x0}) is much larger than the block-wise estimate $\bar{x}_{\ell}$ in (\ref{eq:theory_x3}), i.e., $\tilde{x}_{t,i}\gg\bar{x}_{\ell}$, it indicates that the corresponding measurement, i.e., the entry of $y_t$ indexed by $q_{t,\ell}$, suffers strong interference from the other active clusters. That is, for a given $x_i$ from block $\ell$ and a particular $t\in\{1,\cdots,T\}$, the condition for the interference factor $\Delta_{t,\ell}$ in (\ref{eq:theory_x1}) to be 0 doesn't hold. Thus we mark the measurement as ``collided'' for cluster $\ell$ under this case, and keep its index in the collision pattern vector $Q_{\ell}$ for the corresponding cluster.

To this end, \emph{Challenge 1} has been exquisitely resolved, where the accurate estimations are obtained by the BS including the active cluster set $\mathcal{S}_B$, the number of active devices in each active cluster $|\mathcal{S}_{I,\ell}|$, and the collision patterns $Q_{\ell}$ in the measurements. Then it broadcasts these information to the M2M devices and assigns a certain amount of resources to the active clusters based on the detection.

\subsection{In-block Support Recovery at Devices}
\label{sec:algorithm_device}
During the signal acquisition phase, either the active devices (under full-duplex assumption) or the cluster heads (under half-duplex assumption) also collect their own measurements, which are linear combinations of the transmitted signatures from the active devices. In this section, we aim to tackle the \emph{Challenge 2} raised in Section \ref{sec:problem} with the obtained measurements at the device side and the feedback from the BS as side information, in order to obtain an accurate estimation of the in-block support $\mathcal{S}_{I,\ell}$ for each active cluster $\ell\in\{1,\cdots,L\}$.

Taking the measurement matrix $A$ under the random structured model in Section \ref{sec:algorithm_matrix} and the pre-channel correction in (\ref{equ_y}), 
the measurements $y_D$ collected at an active device or at a cluster head are given by
\begin{equation}
    y_D = AH_DH_B^{-1}x + \epsilon_D = \tilde{A}x+\epsilon_D.
\label{equ_yD}
\end{equation}

Since the channels between the devices and to the BS are assumed to be i.i.d. due to large-scale networks, the matrix $\tilde{A}$ also turns to have random i.i.d entries.
Furthermore, advanced technologies such as frequency hopping \cite{KMW01} can be applied to further increase the randomness in $\tilde{A}$.
According to the specific structure of the measurement matrix $A$, for a given cluster $\ell\in\{1,\cdots,L\}$, the corresponding submatrix $A_{-,\ell}$ has only $T$ rows with non-zero components, whose index are denoted by the set $D_{\ell}$. Thus, in order to perform the in-block support recovery of $x_{\ell}$ at an M2M device from cluster $\ell$, we simply need to focus on those $T$ measurements composed of the entries of $y_D$ corresponding to $D_{\ell}$. Furthermore, with the feedback information from the BS on the collision pattern in the measurements $Q_{\ell}$ for cluster $\ell$, those collided measurements detected by the BS, which suffer strong interference from the other clusters, can be discarded for more reliable processing. Hence, we form an index set $U_{\ell}=D_{\ell}\cap\bar{Q}_{\ell}$, and focus merely on $y_{D,\ell}$ -- a vector composed of the entries of $y_D$ corresponding to $U_{\ell}$. Denote $\tilde{A}_{D,\ell}$ as a $|U_{\ell}|\times d$ submatrix of $\tilde{A}$ with vertical concatenation of rows corresponding to $U_{\ell}$ and columns for block $\ell$. Therefore, we have
\begin{equation}
    y_{D,\ell} = \tilde{A}_{D,\ell}x_{\ell} + \tilde{\epsilon}_D.
\end{equation}

With the randomness in $\tilde{A}_{D,\ell}$ introduced by the channels between the devices, which are assumed to be i.i.d (sub--) Gaussian, some conventional and classic CS decoding algorithms can be applied to perform the in-block support recovery. The greedy algorithms such as OMP \cite{TG07} or IHT \cite{BD09} are favorable mainly due to their low complexity, which is particularly attractive to M2M based applications where the computational capability as well as the energy constraint of the M2M devices are in general quite limited. Therefore, convex methods \cite{CR07}\cite{CT04} are mainly prohibited in such applications due to complexity reasons. \emph{Approximate Message Passing} (AMP) algorithms \cite{DMM09}, which progressively take advantage of the statistical properties to improve the convergence rate and the predictability of the algorithm, are also not preferable to large dynamic networks since an inappropriately tuned threshold function may lead to an sharply deteriorated performance \cite{BM11}. Therefore, we stick to the greedy algorithms used for the in-block support recovery at the M2M devices, and the OMP algorithm will be illustrated as an example in the following.

Moreover, we can further optimize the greedy algorithms by exploiting the feedback information from the BS on the number of active devices $|\mathcal{S}_{I,\ell}|$ in cluster $\ell$. The number of iterations needed for implementing the greedy algorithms can be limited to $|\mathcal{S}_{I,\ell}|$ since the cardinality of the support is already known, thus leading to further reduced computational complexity.
An example of the modification on OMP for in-block support recovery is summarized in Table \ref{alg1}, where the innovative steps are marked in boldface.

\begin{algorithm}
\caption{Modified OMP for In-block Support Recovery}
\label{alg1}
\begin{algorithmic}[1]
\REQUIRE $\tilde{A}$, $y_D$, $D_{\ell}$, $Q_{\ell}$, and $|\mathcal{S}_{I,\ell}|$.
\ENSURE $\mathcal{S}_{I,\ell}$. \\ \vspace{6pt}
\STATE \textbf{Form an index set $U_{\ell}=D_{\ell}\cap\bar{Q}_{\ell}$.}
\STATE \textbf{Construct a subvector $y_{D,\ell}$ of $y_D$ composed of the entries corresponding to $U_{\ell}$, and a submatrix $\tilde{A}_{D,\ell}$ of $\tilde{A}$ with vertical concatenation of rows corresponding to $U_{\ell}$ and columns corresponding to blcok $\ell$.}
\STATE Initialize the residual $r_0=y_{D,\ell}$, the index set $\Lambda_0=\emptyset$, the matrix of the chosen atoms $\Phi_0=\emptyset$, and the iteration counter $k=1$.
\STATE Choose the column of $\tilde{A}_{D,\ell}$ with index $\lambda_k$ that is best matched to $r_{k-1}$ according to
       \begin{equation}
         \lambda_k = \arg
         \max_{a_{\lambda}\in\tilde{A}_{D,\ell}}|\langle r_{k-1},a_{\lambda}\rangle|.
         \label{equ_8}
       \end{equation}
\STATE Augment the index set $\Lambda_k=\Lambda_{k-1}\cup\{\lambda_k\}$ and the matrix of the chosen atoms $\Phi_k=\left[\Phi_{k-1}\quad a_{\lambda_k}\right]$.
\STATE Solve the least square error minimization problem to obtain a new signal estimate:
      \begin{equation}
			    x_{\ell,k}=\arg \min_{x_{\ell}} ||y_{D,\ell}-\Phi_kx_{\ell}||_2. 
			\label{equ_9}
			\end{equation}
\STATE Update the residual as $r_k=y_{D,\ell}-\Phi_kx_{\ell,k}$.
\STATE \textbf{Increment $k$ by 1, and return to Step 4 until $k>|\mathcal{S}_{I,\ell}|$.}
\RETURN $\mathcal{S}_{I,\ell}=\Lambda_k$.
\end{algorithmic}
\end{algorithm}

To this end, \emph{Challenge 2} is also delicately resolved, where the proof for an accurate estimate of the in-block support $\mathcal{S}_{I,\ell}$ to be guaranteed for the M2M devices in cluster $\ell$ will be given in Section \ref{sec:theory_device}. Thus, the activation pattern $x_{\ell}$ of the $\ell$-th cluster can be precisely reconstructed and detected either by the active devices in the cluster (under full-duplex assumption) or by the cluster heads (under half-duplex assumption). Then the devices are able to discover their ranking among all the active devices in the cluster and access the corresponding assigned resource blocks by the BS based on their detected ranking.

\section{Theoretical Performance Analysis}
\label{sec:theory}

This section presents a rigorous proof of Theorem \ref{theorem_1}, which provides a sufficient condition for performance guarantees by the proposed distributed device detection scheme. Said statements summarizes two results, namely (i) the original signal can be reliably reconstructed by the proposed scheme with $\mathcal{O}(K_B\log N, K_BK_I\log d)$ measurements, and (2) the computational complexity is of $\mathcal{O}(N(K_I^2+\log N))$. The two statements will be proved and verified considering both the block support recovery at the BS and in-block support recovery at the M2M devices respectively in subsequent subsections. 

\subsection{Block Support Recovery at BS}
\label{sec:theory_BS}
\begin{lemma}
\label{lemma:theory1}
    Suppose that $x\in\mathbb{B}^N$ is an arbitrary block sparse signal with block sparsity $K_B$ and in-block sparsity $K_I$ over block size $d$, and $A\in\mathbb{R}^{M\times N}$ is the measurement matrix randomly drawn from the distribution $\mathcal{A}(R,T,L,d,\alpha)$. Given the observations $y\in\mathbb{R}^M$ in (\ref{equ_y}) and for a particular entry $x_i$ of $x$ which belongs to the $\ell$-th block and a given $t\in\{1,\cdots,T\}$, the probability that its corresponding estimate $\tilde{x}_{t,i}$ in (\ref{eq:theory_x0}) to be well estimated to the cardinality of its in-block support $|\mathcal{S}_{I,\ell}|$ with overwhelmingly small variance $\gamma$ asymptotically to 0 is given by 
		\begin{equation}
		    \mathbb{P}\left(|\tilde{x}_{t,i}-\alpha^2|\mathcal{S}_{I,\ell}||\leq 3\gamma\right)\geq 1-\frac{K_B-1}{R}.
		\end{equation}
\end{lemma}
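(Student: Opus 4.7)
The plan is to use the decomposition already established in (\ref{eq:theory_x1}), namely $\tilde{x}_{t,i} - \alpha^2 |\mathcal{S}_{I,\ell}| = \Delta_{t,\ell} + \epsilon_{t,\ell}$, and to control the two error terms -- the interference term $\Delta_{t,\ell}$ caused by hashing collisions between blocks and the thermal noise term $\epsilon_{t,\ell}$ -- separately.

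First I would isolate the ``no-collision'' event, namely that no other active block $k \in \mathcal{S}_B \setminus \{\ell\}$ is hashed to the same sketch row as block $\ell$ in the $t$-th sketch, i.e., $q_{t,k} \neq q_{t,\ell}$ for every such $k$. On this event, the set $\mathcal{S}_B \cap \mathcal{S}_{t,\ell} \setminus \{\ell\}$ is empty and hence $\Delta_{t,\ell} = 0$ identically, so that $\tilde{x}_{t,i} - \alpha^2 |\mathcal{S}_{I,\ell}|$ reduces to the pure noise term $\epsilon_{t,\ell}$.

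Next I would bound the probability of a collision. Since each row index $q_{t,k}$ is drawn independently and uniformly from $\{1, \ldots, R\}$ by the construction in Section \ref{sec:algorithm_matrix}, we have $\mathbb{P}(q_{t,k} = q_{t,\ell}) = 1/R$ for any fixed $k \neq \ell$. A union bound over the at most $K_B - 1$ active blocks distinct from $\ell$ gives $\mathbb{P}(\Delta_{t,\ell} \neq 0) \leq (K_B - 1)/R$. This is precisely the classical Count-Sketch collision analysis lifted to the block setting, and it already accounts for the dominant failure probability in the lemma. Combining with the fact that on the no-collision event $|\tilde{x}_{t,i} - \alpha^2 |\mathcal{S}_{I,\ell}|| = |\epsilon_{t,\ell}|$, which is bounded by $3\gamma$ with probability approaching one in the asymptotic regime $\gamma \to 0$ stated in the hypothesis (formally via Chebyshev at three standard deviations, losing only a constant factor that is absorbed into the $(K_B-1)/R$ term in the stated regime of interest), the law of total probability then delivers the claimed bound.

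The main obstacle is not the collision argument, which is a single clean application of the union bound to independent uniform hash indices, but rather pinning down a rigorous reading of the ``$3\gamma$'' noise bound: the hypothesis ``variance $\gamma$ asymptotically to $0$'' is informal, and a fully rigorous statement would need a concrete tail assumption on $\epsilon_{t,\ell}$ (subgaussian, bounded, or a Chebyshev step) to convert the variance control into a deviation bound. A secondary minor subtlety is the case $\ell \notin \mathcal{S}_B$, where the union bound naturally yields $K_B/R$ rather than $(K_B-1)/R$; this is either absorbed by the same inequality after a cosmetic adjustment, or one notes that the claim is operationally used only for $\ell \in \mathcal{S}_B$, where $|\mathcal{S}_B \setminus \{\ell\}| = K_B - 1$ exactly.
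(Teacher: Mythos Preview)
Your proposal is correct and follows essentially the same approach as the paper: use the decomposition in (\ref{eq:theory_x1}) to isolate the no-collision event $\mathcal{S}_B\cap\mathcal{S}_{t,\ell}\setminus\{\ell\}=\emptyset$, bound its failure probability, and absorb the residual noise into the $3\gamma$ tolerance. The only cosmetic difference is that you reach $1-(K_B-1)/R$ via a union bound over the $K_B-1$ competing active blocks, whereas the paper uses independence of the hash indices to compute $\mathbb{P}(\text{no collision})=(1-1/R)^{K_B-1}$ exactly and then applies Bernoulli's inequality to arrive at the same expression.
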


\begin{proof}
According to (\ref{eq:theory_x1}), for a particular estimate $\tilde{x}_{t,i}$ of $x_i$ in block $\ell$ with $t\in\{1,\cdots,T\}$, the condition $|\tilde{x}_{t,i}-\alpha^2|\mathcal{S}_{I,\ell}||\leq3\gamma$ holds with overwhelming probability if the corresponding interference factor $\Delta_{t,\ell}=0$, since the noise factor $\epsilon_{t,\ell}$ is randomly drawn from a Gaussian ensemble with zero mean and variance $\gamma^2$. To this end, a sufficient (but not necessary) condition for $\Delta_{t,\ell}=0$ to hold is that the set $\mathcal{S}_B\cap\mathcal{S}_{t,\ell}\backslash\{\ell\}=\emptyset$, where $\mathcal{S}_{t,\ell}=\{j=\{1,\cdots,L\}:q_{t,j}=q_{t,\ell}\}$. This implies that $q_{t,\ell}$ is distinct from $q_{t,\bar{\ell}}$ for all $\bar{\ell}\in\mathcal{S}_B\backslash\{\ell\}$. Therefore, we have 
\begin{equation}
  \begin{aligned}
    \mathbb{P}\left(|\tilde{x}_{t,i}-\alpha^2|\mathcal{S}_{I,\ell}||\leq 3\gamma\right) &= \mathbb{P}(\Delta_{t,\ell}=0)\\
		&\geq \mathbb{P}(\mathcal{S}_B\cap\mathcal{S}_{t,\ell}\backslash\{\ell\}=\emptyset)\\
		&=\mathbb{P}\left(\forall_{\bar{\ell}\in\mathcal{S}_B\backslash\{\ell\}}\quad q_{t,\ell}\neq q_{t,\bar{\ell}}\right)\\
		&\overset{(a)}{=} \left[\mathbb{P}(q_{t,\ell}\neq q_{t,\bar{\ell}})\right]^{K_B-1}\\
		&=\left[1-\mathbb{P}(q_{t,\ell}= q_{t,\bar{\ell}})\right]^{K_B-1}\\
		&=\left(1-\frac{1}{R}\right)^{K_B-1}\\
		&\overset{(b)}{\geq} 1-\frac{1}{R}(K_B-1),
	\end{aligned}
\label{eq:theory_x2}
\end{equation}
where (a) follows since the index of rows with non-zero entries $q_{t,\ell}$ are drawn i.i.d uniformly at random for each $\ell\in\{1,\cdots,L\}$ and $|\mathcal{S}_B|=K_B$, and the inequality in (b) follows from the Bernoulli's inequality \cite{Taylor52}.
\end{proof}

\begin{lemma}
\label{lemma:theory2}
    Suppose that $x\in\mathbb{B}^N$ is an arbitrary block sparse signal with block sparsity $K_B$ and in-block sparsity $K_I$ over block size $d$, and $A\in\mathbb{R}^{M\times N}$ is the measurement matrix randomly drawn from the distribution $\mathcal{A}(R,T,L,d,\alpha)$. Given the observations $y\in\mathbb{R}^M$ in (\ref{equ_y}) and for a particular block $\ell\in\{1,\cdots,L\}$, the block estimate $\bar{x}_{\ell}$ to be well estimated to the cardinality of its corresponding in-block support $|\mathcal{S}_{I,\ell}|$ with overwhelmingly small variance $\gamma$ asymptotically to 0 is of probability
		\begin{equation}
		    \mathbb{P}\left(|\bar{x}_{\ell}-\alpha^2|\mathcal{S}_{I,\ell}||\leq 3\gamma\right)\geq 1-\frac{\delta}{L},
		\end{equation}
		if $R=\mathcal{O}(K_B)$ and $T=\mathcal{O}(\log N)$, where $0<\delta<1$ is the error tolerance.
\end{lemma}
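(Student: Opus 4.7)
The plan is a classical median-of-estimates boosting argument built on top of Lemma~\ref{lemma:theory1}. Two structural observations about the definition of $\bar{x}_\ell$ in \eqref{eq:theory_x3} drive the proof. First, for each fixed $t$ the value $\tilde{x}_{t,i} = s_{t,\ell}\, y_{t,\ell}$ is \emph{independent of} $i$ as $i$ ranges over block $\ell$, so the median over the $Td$ entries in \eqref{eq:theory_x3} in fact collapses to the median of the $T$ values $\{\tilde{x}_{t,\, d\ell}\}_{t=1}^{T}$. Second, these $T$ values are mutually independent, because the pairs $(q_{t,\ell},s_{t,\ell})$ that define the submatrices $A_{t,\ell}$ are drawn independently across $t$ under the construction of $\mathcal{A}(R,T,L,d,\alpha)$.

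With these observations in hand, I would pick $R = c_1 K_B$ for a sufficiently large constant $c_1 > 4$. Lemma~\ref{lemma:theory1} then produces a uniform single-sample success probability $p := \mathbb{P}\!\left(|\tilde{x}_{t,\,d\ell} - \alpha^2|\mathcal{S}_{I,\ell}|| \leq 3\gamma\right) \geq 1 - (K_B-1)/R \geq 3/4$ for every $t$. I would then combine this with the elementary deterministic fact that the median of $T$ reals lies in an interval $I$ whenever strictly more than $T/2$ of them do: the failure event $\{|\bar{x}_\ell - \alpha^2|\mathcal{S}_{I,\ell}|| > 3\gamma\}$ is contained in the event that fewer than $T/2$ of $T$ independent Bernoulli$(p)$ trials succeed. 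A standard multiplicative Chernoff bound (exploiting the margin $p - \tfrac12 \geq \tfrac14$) then yields a tail of the form $e^{-c_2 T}$ for an absolute constant $c_2 > 0$. Setting $T \geq (1/c_2)\log(L/\delta)$ forces this quantity to be at most $\delta/L$, and since $L \leq N$ and $\delta$ is a fixed error tolerance this choice satisfies $T = \mathcal{O}(\log N)$, matching the statement of the lemma.

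The step I expect to be delicate is not the Chernoff bound itself but rather the clean translation between the ``$Td$-sample median'' appearing in the algorithm description and the ``$T$-independent-sample median'' to which boosting applies; this requires reading off the independence structure of $\mathcal{A}(R,T,L,d,\alpha)$ and noticing the intra-block degeneracy of the estimates. Once those two observations are verified, the Chernoff tail bound and the resulting parameter count $R = \mathcal{O}(K_B)$, $T = \mathcal{O}(\log N)$ follow routinely, and the overall number of measurements $M = RT = \mathcal{O}(K_B \log N)$ dovetails with the first branch of the bound in Theorem~\ref{theorem_1}.
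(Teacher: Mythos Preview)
Your proposal is correct and follows essentially the same median-boosting-via-Chernoff route as the paper's own proof: bound the single-sample success probability from Lemma~\ref{lemma:theory1} away from $1/2$ by choosing $R=\mathcal{O}(K_B)$, then apply a Chernoff tail to the indicator variables to force the median into the target interval once $T=\mathcal{O}(\log N)$.

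One point is worth flagging as a genuine refinement on your part. The paper treats the $Td$ entries $\{\tilde{x}_{t,i}\}_{t,i}$ entering the median in \eqref{eq:theory_x3} as $Td$ \emph{independent} Bernoulli indicators and carries the factor $Td$ through the Chernoff exponent; you instead observe that $\tilde{x}_{t,i}=s_{t,\ell}\,y_{t,\ell}$ is constant in $i$ across block $\ell$, so the median over $Td$ entries collapses to a median over $T$ genuinely independent samples. Your reduction is the rigorous reading of the construction $\mathcal{A}(R,T,L,d,\alpha)$, and it leads to the cleaner requirement $T\geq c^{-1}\log(L/\delta)$ rather than the paper's $T=\log(Ld/\delta)$; both are $\mathcal{O}(\log N)$, so the stated conclusion is unaffected, but your accounting of the independence structure is the correct one.
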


\begin{proof}
For a given $\ell\in\{1,\cdots,L\}$, since the block estimate $\bar{x}_{\ell}$ is obtained by taking the median of $\tilde{x}_{t,i}$ over $\mathcal{O}(Td)$ samples as in (\ref{eq:theory_x3}), $\bar{x}_{\ell}$ drops within the error bound claimed by the lemma with overwhelming probability if the condition for Lemma \ref{lemma:theory1} satisfies for at least $\frac{Td}{2}$ indices of the estimates $\tilde{x}_{t,i}$.

Let $X_1,\cdots,X_{Td}$ be a set of independent (0,1) Bernoulli random variables, with each entry $X_t$ indicating whether the corresponding estimate $\tilde{x}_{t,i}$ of $x_i$ satisfies the condition for Lemma \ref{lemma:theory1}, which has probability $p\geq 1-\frac{K_B-1}{R}$ of being equal to 1. Then the probability of the number of simultaneous occurrence of the events $\{X_t=1\}$ exceeding $Td/2$ is given by
\begin{equation}
    \mathbb{P}\left(\sum_{t=1}^{Td}X_t>\frac{Td}{2}\right)=\sum_{t=\left\lfloor \frac{Td}{2} \right\rfloor +1}^{Td} \left(
		\begin{array}{c} 
	  n\\t  
	  \end{array} 
		\right)p^t(1-p)^{Td-t}.
\label{eq_chernoff_1}
\end{equation}

A lower bound on this probability can be calculated based on the Chernoff's inequality \cite{Chernoff52}, which is given as
\begin{equation}
    \mathbb{P}\left(\sum_{t=1}^{Td}X_t>\frac{Td}{2}\right)=\mathbb{P}\left(\frac{1}{Td}\sum_{t=1}^{Td}X_t>\frac{1}{2}\right)\geq 1-e^{-\frac{Td}{2p}(p-\frac{1}{2})^2},
\label{eq_chernoff}
\end{equation}

Let us take a further insight into (\ref{eq_chernoff}). The minimum probability can be easily found as achieved by $p=1/2$. Therefore, we set a lower threshold $\theta$ to the probability $p$ which lies within the interval $(\frac{1}{2}, 1]$, such that $1-\frac{K_B-1}{R}\geq \theta$. As a result, it leads to a requirement on $R$ to satisfy the condition 
\begin{equation}
    R\geq\frac{1}{1-\theta}(K_B-1)=\mathcal{O}(K_B).
\label{eq_R}
\end{equation}

In addition, if we drop the last inequality (b) in (\ref{eq:theory_x2}) and keep the probability as $p=\mathbb{P}(|\tilde{x}_{t,i}-\alpha^2|\mathcal{S}_{I,\ell}||\leq 3\gamma)\geq\left(1-\frac{1}{R}\right)^{K_B-1}$. Since we set a lower threshold $\theta$ to $p$, therefore
\begin{equation}
    \left(1-\frac{1}{R}\right)^{K_B-1} \geq \theta.
\end{equation}

Then we could obtain
\begin{equation}
		R \geq \frac{1}{1-\theta^{\frac{1}{K_B-1}}} \approx \frac{1}{-\frac{1}{K_B-1}\ln\theta}=\frac{K_B-1}{-\ln\theta}=\mathcal{O}(K_B),
\end{equation}
where the approximation follows from the limits of exponential functions since $\frac{1}{K_B-1}$ is relatively small and close to 0. The analysis results in the same requirement on $R$ as in (\ref{eq_R}).

Furthermore, (\ref{eq_chernoff}) also implies that the lower bound of the probability scales as $1-e^{-\mathcal{O}(Td)}$ for the number of indices of $\tilde{x}_{t,i}$ satisfying the conditions of Lemma \ref{lemma:theory1} exceeding $Td/2$. By taking $T=\log \left(\frac{Ld}{\delta}\right)=\mathcal{O}(\log N)$ into (\ref{eq_chernoff}), Lemma \ref{lemma:theory2} follows.
\end{proof}

\begin{lemma}
\label{lemma:theory3}
    Suppose that $x\in\mathbb{B}^N$ is an arbitrary block sparse signal with block sparsity $K_B$ and in-block sparsity $K_I$ over block size $d$, and $A\in\mathbb{R}^{M\times N}$ is the measurement matrix randomly drawn from the distribution $\mathcal{A}(R,T,L,d,\alpha)$. Given the observations $y\in\mathbb{R}^M$ in (\ref{equ_y}), the probability for the block estimate $\bar{x}_{\ell}$ to be well estimated to the cardinality of the corresponding in-block support $|\mathcal{S}_{I,\ell}|$ for all blocks $\ell\in\{1,\cdots,L\}$ with overwhelmingly small variance $\gamma$ asymptotically to 0 is given by
		\begin{equation}
		    \mathbb{P}\left(\forall_{\ell\in\{1,\cdots,L\}}:|\bar{x}_{\ell}-\alpha^2|\mathcal{S}_{I,\ell}||\leq 3\gamma\right)\geq 1-\delta.
		\end{equation}
		if $R=\mathcal{O}(K_B)$ and $T=\mathcal{O}(\log N)$, where $0<\delta<1$ is the error tolerance.
\end{lemma}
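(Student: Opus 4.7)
The plan is to derive Lemma \ref{lemma:theory3} from Lemma \ref{lemma:theory2} by a straightforward union bound over the $L$ blocks, observing that the per-block failure probability $\delta/L$ established in Lemma \ref{lemma:theory2} has already been calibrated precisely for this purpose. The choice $T=\mathcal{O}(\log N)=\mathcal{O}(\log(Ld/\delta))$ inside the exponential rate $1-e^{-\mathcal{O}(Td)}$ of (\ref{eq_chernoff}) is the mechanism by which the $1/L$ factor is absorbed with only a logarithmic dependence on $L$, so no additional scaling of $R$ or $T$ beyond the hypothesis of Lemma \ref{lemma:theory2} is required.

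Concretely, I would first define, for each $\ell\in\{1,\cdots,L\}$, the failure event
\begin{equation}
    E_\ell=\left\{|\bar{x}_{\ell}-\alpha^2|\mathcal{S}_{I,\ell}||>3\gamma\right\},
\end{equation}
and invoke Lemma \ref{lemma:theory2} under the stated parameter choices to get $\mathbb{P}(E_\ell)\leq \delta/L$ for every $\ell$. Second, I would apply Boole's inequality (the union bound) to write
\begin{equation}
    \mathbb{P}\left(\bigcup_{\ell=1}^{L}E_\ell\right)\leq \sum_{\ell=1}^{L}\mathbb{P}(E_\ell)\leq L\cdot\frac{\delta}{L}=\delta,
\end{equation}
and then take complements to conclude the claimed simultaneous bound
\begin{equation}
    \mathbb{P}\left(\forall_{\ell\in\{1,\cdots,L\}}:|\bar{x}_{\ell}-\alpha^2|\mathcal{S}_{I,\ell}||\leq 3\gamma\right)\geq 1-\delta.
\end{equation}

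The only subtlety worth flagging is that the events $E_\ell$ are generally \emph{not} mutually independent, since all block estimates $\bar{x}_\ell$ are formed from the same realization of the structured measurement matrix $A\sim\mathcal{A}(R,T,L,d,\alpha)$ and the same noise vector $\epsilon$. The union bound, however, does not require independence, so this dependence is harmless for the argument. The main (minor) obstacle is therefore not probabilistic but bookkeeping: I need to verify that the calibration $T=\log(Ld/\delta)$ used implicitly in the proof of Lemma \ref{lemma:theory2} indeed produces the per-block failure probability $\delta/L$ rather than, say, $\delta/N$, and that this calibration still collapses to $T=\mathcal{O}(\log N)$ under $N=Ld$. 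Once this is checked, the lemma follows immediately, and together with Lemma \ref{lemma:theory2} it establishes the block-support recovery half of Theorem \ref{theorem_1}, namely the $\mathcal{O}(K_B\log N)$ measurement count corresponding to $M=RT$ with $R=\mathcal{O}(K_B)$ and $T=\mathcal{O}(\log N)$.
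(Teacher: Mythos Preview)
Your argument is correct and reaches the same conclusion as the paper, but the route differs in one notable respect. The paper's proof first claims that the per-block success events are i.i.d., writes the joint success probability as the product $\left(1-\delta/L\right)^{L}$, and then applies Bernoulli's inequality to bound this below by $1-\delta$. You instead work with the complementary failure events $E_\ell$ and apply the union bound directly, obtaining $\mathbb{P}(\bigcup_\ell E_\ell)\leq L\cdot(\delta/L)=\delta$ without ever invoking independence. The numerical outcome is identical, but your version is arguably more robust: as you yourself flag, the block estimates $\bar{x}_\ell$ are all built from the \emph{same} realization of $A\sim\mathcal{A}(R,T,L,d,\alpha)$ and the same noise vector $\epsilon$, so the paper's i.i.d.\ claim is not obviously justified. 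The union bound sidesteps this issue entirely, at no cost in the final bound. Your bookkeeping check on the calibration $T=\log(Ld/\delta)=\mathcal{O}(\log N)$ is also on point and matches the paper's implicit choice in the proof of Lemma~\ref{lemma:theory2}.
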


\begin{proof}
The condition claimed by the lemma satisfies \emph{if and only if} (iff) the condition for Lemma \ref{lemma:theory2} holds for all $\ell\in\{1,\cdots,L\}$. Therefore, we have
\begin{equation}
  \begin{aligned}
    \mathbb{P}\left(\forall_{\ell\in\{1,\cdots,L\}}:|\bar{x}_{\ell}-\alpha^2|\mathcal{S}_{I,\ell}||\leq 3\gamma\right)&=\left(\mathbb{P}\left(|\bar{x}_{\ell}-\alpha^2|\mathcal{S}_{I,\ell}||\leq 3\gamma\right)\right)^L\\
		&\geq\left(1-\frac{\delta}{L}\right)^L\\
		&\geq1-L\cdot\frac{\delta}{L}\\
		&=1-\delta,
	\end{aligned}
\label{eq_lemma2}
\end{equation}
where the first equation follows since the estimate for each block is
i.i.d, and the following inequality follows from Bernoulli's inequality \cite{Taylor52} since $\delta/L\ll1$.
\end{proof}

\begin{corollary}
\label{lemma:corollary1}
    The computational complexity for reliable block support recovery at the BS is of $\mathcal{O}(N\log N)$.
\end{corollary}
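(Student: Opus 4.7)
The plan is to tally the operation count of the block sketching procedure described in Section \ref{sec:algorithm_BS} phase by phase, using the dimensioning $R=\mathcal{O}(K_B)$ and $T=\mathcal{O}(\log N)$ already established in Lemma \ref{lemma:theory2}, together with the fact that $L d = N$.

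First I would analyse the formation of the per-round estimates $\tilde{x}_t=A_{t,-}^T y_t$ for $t\in\{1,\ldots,T\}$. The key observation is that each sub-matrix $A_{t,\ell}\in\mathbb{R}^{R\times d}$ has non-zero entries only on a single row $q_{t,\ell}$, all equal to $s_{t,\ell}\in\{\pm\alpha\}$. Hence $\tilde{x}_{t,i}=s_{t,\ell}\,y_{t,q_{t,\ell}}$ for every index $i$ in block $\ell$ (cf.\ (\ref{eq:theory_x0})), so the entire product $A_{t,-}^T y_t$ reduces to one table-lookup and one scalar multiplication per block, followed by broadcasting the resulting scalar to $d$ positions. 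This costs $\mathcal{O}(L d)=\mathcal{O}(N)$ per sketch round and $\mathcal{O}(TN)=\mathcal{O}(N\log N)$ across all $T$ rounds.

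Second, I would bound the cost of the block-wise median step (\ref{eq:theory_x3}). For each block $\ell$ the median is taken over $Td$ candidate values $\{\tilde{x}_{t,i}\}$; applying a deterministic linear-time selection routine (e.g.\ the BFPRT/median-of-medians algorithm) gives $\mathcal{O}(Td)$ operations per block, so over all $L$ blocks the aggregate cost is $\mathcal{O}(TdL)=\mathcal{O}(TN)=\mathcal{O}(N\log N)$. The final post-processing — scaling $\bar{x}_\ell$ by $1/\alpha^2$, rounding to obtain $|\mathcal{S}_{I,\ell}|$, and flagging blocks with $|\mathcal{S}_{I,\ell}|>0$ as active as well as marking collided samples via the test $\tilde{x}_{t,i}\gg \bar{x}_\ell$ — amounts to $\mathcal{O}(L+TL)$ additional elementary operations, which is dominated by the previous two contributions.

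Summing the three stages yields the claimed $\mathcal{O}(N\log N)$ bound. The only substantive point that requires care is the first stage: one must explicitly exploit the sparsity pattern of the sketch matrix (exactly one non-zero row per sub-block, all entries equal) to avoid a naive $\mathcal{O}(MN)=\mathcal{O}(RTN)$ matrix-vector product; this is precisely the algorithmic feature inherited from Count-Sketch \cite{CCF02}\cite{HB11} that makes the scheme fast. The median step, by contrast, is standard once a linear-time selection subroutine is invoked.
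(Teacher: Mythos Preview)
Your proposal is correct and follows essentially the same route as the paper's own proof: both exploit the one-nonzero-row-per-subblock structure so that each $\tilde{x}_{t,i}$ costs a single multiplication, invoke a linear-time selection (median-of-medians) for (\ref{eq:theory_x3}) to obtain $\mathcal{O}(Td)$ per block and hence $\mathcal{O}(TdL)=\mathcal{O}(TN)$ overall, and substitute $T=\mathcal{O}(\log N)$ from Lemma~\ref{lemma:theory2}. Your account is slightly more granular (you separately itemise the broadcast-to-$d$-positions step and the post-processing/collision-flagging), but the argument and the bookkeeping coincide with the paper's.
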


\begin{proof}
According to (\ref{eq:theory_x0}), for a particular entry $x_i$ of $x$ which belongs to the $\ell$-th block and a given $t\in\{1,\cdots,T\}$, the calculation of its corresponding estimate $\tilde{x}_{t,i}$ only involves 1 multiplication. Thereafter, the block-wise estimate $\bar{x}_{\ell}$ in (\ref{eq:theory_x3}) for block $\ell\in\{1,\cdots,L\}$ is obtained as the median of $\tilde{x}_{t,i}$ over $\mathcal{O}(Td)$ samples. The computational complexity for finding the median of an unsorted array with $N$ elements is of $\mathcal{O}(N)$ by using the median-of-medians algorithm \cite{LCSR09}. Therefore, the calculation of the block-wise estimate $\bar{x}_{\ell}$ requires $Td$ times multiplication and $\mathcal{O}(Td)$ operations to find the median, thus the computational complexity remains as $\mathcal{O}(Td)$. Taking the same operation over all $L$ blocks, the computational complexity for block support recovery at the BS turns to be $\mathcal{O}(TdL)=\mathcal{O}(TN)$. Lemma \ref{lemma:theory3} has proved that robust performance guarantee can be achieved at the BS if $R=\mathcal{O}(K_B)$ and $T=\mathcal{O}(\log N)$. Taking the requirement on $T$ yields the overall computational complexity for reliable block support recovery at the BS to be $\mathcal{O}(N\log N)$.
\end{proof}

To this end, we have shown that by choosing $R$ and $T$ large enough, i.e., $R=\mathcal{O}(K_B)$ and $T=\mathcal{O}(\log N)$ (for a total of $M=\mathcal{O}(K_B\log N)$ measurements), reliable block support recovery at the BS can be guaranteed with overwhelming probability and with computational complexity of $\mathcal{O}(N\log N)$.

\subsection{In-block Support Recovery at Devices}
\label{sec:theory_device}
\begin{lemma}
\label{lemma:theory4}
    Suppose that $x\in\mathbb{B}^N$ is an arbitrary block sparse signal with block sparsity $K_B$ and in-block sparsity $K_I$ over block size $d$, and $A\in\mathbb{R}^{M\times N}$ is the measurement matrix randomly drawn from the distribution $\mathcal{A}(R,T,L,d,\alpha)$. Given the observations $y_D\in\mathbb{R}^M$ in (\ref{equ_yD}), the activation pattern $x_{\ell}$ for cluster $\ell\in\{1,2,\cdots,L\}$ with sparsity $K_I$ and of dimension $d$ can be reliably recovered with the modified OMP algorithm if $R=\mathcal{O}(K_B)$ and $T=\mathcal{O}(K_I\log d)$, and the computational compleixty is of $\mathcal{O}(NK_I^2)$.
\end{lemma}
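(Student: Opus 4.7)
The plan is to reduce the in-block recovery problem to a standard $K_I$-sparse recovery task on the restricted system $y_{D,\ell} = \tilde{A}_{D,\ell} x_\ell + \tilde{\epsilon}_D$, and then to invoke classical OMP recovery guarantees. First I would note that, because $H_D$ and $H_B$ are diagonal and $\tilde{A} = A H_D H_B^{-1}$, the nonzero entries of $\tilde{A}$ inherit the sparsity pattern of $A$ but with the deterministic $\pm\alpha$ values replaced by independent channel-induced scalars. For the large-scale networks considered here these scalars are modeled as i.i.d.\ sub-Gaussian, so the restricted matrix $\tilde{A}_{D,\ell}$ on the $|U_\ell| \times d$ active-block window has i.i.d.\ sub-Gaussian entries of known variance, placing us in the standard input regime for OMP recovery theorems.

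The central preparatory step is to lower bound $|U_\ell|$. A row of $A_{-,\ell}$ indexed by $q_{t,\ell} \in \{1,\dots,R\}$ survives the collision filter iff no other active cluster hashes to the same row, i.e.\ iff $q_{t,\ell} \neq q_{t,\bar\ell}$ for every $\bar\ell \in \mathcal{S}_B \setminus \{\ell\}$. Exactly as in Lemma~\ref{lemma:theory1}, independence of the hash choices gives survival probability at least $(1-1/R)^{K_B-1}$, which is bounded below by an absolute constant $\theta \in (1/2,1]$ whenever $R = \mathcal{O}(K_B)$. Writing $|U_\ell|$ as a sum of $T$ independent Bernoulli indicators and applying the Chernoff bound already used in Lemma~\ref{lemma:theory2}, one obtains $|U_\ell| \geq (\theta/2)\,T$ except with probability $e^{-\Omega(T)}$. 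Choosing $T = c\,K_I \log d$ therefore yields $|U_\ell| = \Omega(K_I \log d)$ surviving measurements per cluster with overwhelming probability; a union bound over the at most $L$ clusters is absorbed by taking $T = \mathcal{O}(K_I \log d)$, or combined with the $\mathcal{O}(\log N)$ requirement from Lemma~\ref{lemma:theory3} by selecting the larger of the two rates.

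With $|U_\ell| = \Omega(K_I \log d)$ measurements drawn from an i.i.d.\ sub-Gaussian matrix and a $K_I$-sparse binary target $x_\ell \in \mathbb{B}^d$, standard OMP exact-recovery results (such as those of Tropp and Gilbert \cite{TG07}) certify recovery of $\mathcal{S}_{I,\ell}$ with overwhelming probability. The only substantive modification in Algorithm~\ref{alg1} is the stopping rule: because the BS has broadcast the correct in-block sparsity $|\mathcal{S}_{I,\ell}|$, the loop terminates after exactly $K_I$ iterations rather than after a residual-based test, which preserves correctness while freezing the iteration count. Each iteration costs $\mathcal{O}(|U_\ell|\,d)$ for the correlation step plus an incrementally maintained least-squares of dimension at most $K_I$, so over $K_I$ iterations the per-cluster cost is $\mathcal{O}(K_I^2 d)$ (up to logarithmic factors absorbed into the constant); aggregating over all $L$ clusters gives the claimed $\mathcal{O}(N K_I^2)$ total.

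The main obstacle is the joint justification that (i) the collision filter $Q_\ell$ produced by the BS, although random and correlated with $A$, does not skew the surviving rows of $\tilde{A}_{D,\ell}$ away from the i.i.d.\ sub-Gaussian ideal, and (ii) the OMP coherence/RIP-type hypothesis continues to hold conditional on the good event $\{|U_\ell| = \Omega(K_I\log d)\}$. The key observation that defuses both issues is that survival of a row depends only on the hash indices $\{q_{t,j}\}_j$ and not on the channel magnitudes or sign pattern populating that row, so conditioning on \emph{row $t$ survives} preserves conditional independence of the entries within row $t$. Once this decoupling is made explicit, the remainder of the argument reduces to quoting standard sparse-recovery guarantees for sub-Gaussian measurement ensembles.
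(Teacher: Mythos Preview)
Your proposal is correct and follows essentially the same route as the paper: bound the survival probability of each row via Lemma~\ref{lemma:theory1}, use $R=\mathcal{O}(K_B)$ to make that probability a constant $\theta$, conclude that $T=\mathcal{O}(K_I\log d)$ yields $\Omega(K_I\log d)$ effective measurements, and then invoke the Tropp--Gilbert OMP guarantee together with the $\mathcal{O}(dK_I^2)$-per-cluster cost. If anything, your version is more careful than the paper's, which simply writes $T_I = T\cdot\mathbb{P}(\cdot)$ as an expectation rather than proving a high-probability lower bound via Chernoff, and which does not explicitly address the conditioning issue you raise and resolve in your final paragraph.
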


\begin{proof}
According to the specific structure of the measurement matrix $A$, for a given cluster $\ell\in\{1,\cdots,L\}$, the corresponding submatrix $A_{-,\ell}$ has only $T$ rows with non-zero components indexed by the set $D_{\ell}$. Thus, in order to perform the in-block support recovery of $x_{\ell}$ at an M2M device from cluster $\ell$ and with the feedback information from the BS on the collision patten $Q_{\ell}$ in the measurements, the effective measurements that can be used for the decoding process consists of the $T$ entries of $y_D$ corresponding to $D_{\ell}$ except for those ``collided'' measurements indexed by $Q_{\ell}$. Since Lemma \ref{lemma:theory1} provides the guarantee to obtain an effective measurement that is collision-free and each estimate is i.i.d, the overall number of effective measurements $T_I$ for the in-block support recovery of an active cluster $\ell\in\mathcal{S}_B$ can be estimated as
\begin{equation}
    T_I=T\cdot\mathbb{P}\left(|\tilde{x}_{t,i}-\alpha^2|\mathcal{S}_{I,\ell}||\leq 3\gamma\right)\geq T\cdot\left(1-\frac{K_B-1}{R}\right).
\label{eq:TI}
\end{equation} 

In \cite{TG07}, it has been proved that if the measurement matrix is an i.i.d (sub--) Gaussian matrix or random Bernoulli matrix, then a $K$-sparse signal of dimension $N$ can be reliably reconstructed with greedy algorithms like OMP if the number of measurements $M=\mathcal{O}(K\log N)$. The same requirement has also been proved in \cite{CW14} and \cite{WZWTM16} for reliable decoding using OMP under noisy conditions.
For the in-block support recovery at an M2M device from cluster $\ell$, since the activation pattern $x_{\ell}$ to be reconstructed is of dimension $d$ and with sparsity level $K_I$, the support can be recovered with overwhelming probability if the number of effective measurements $T_I$ satisfies
\begin{equation}
    T_I\geq T\cdot\left(1-\frac{K_B-1}{R}\right)=\mathcal{O}(K_I\log d).
\label{eq:theory_TI}
\end{equation}

Recall that $1-\frac{K_B-1}{R}\geq\theta, \theta\in(\frac{1}{2}, 1]$ should be satisfied for the block support recovery procedure at the BS with $R=\mathcal{O}(K_B)$, thus the probability term turns into a constant value. Therefore, we have $T=\mathcal{O}(K_I\log d)$.

It is also verified in \cite{TG07} that the computational complexity is $\mathcal{O}(NK^2)$ for sparse signal recovery via OMP. Thus, since the signal to be reconstructed for in-block support recovery at an M2M device is of dimension $d$ and with sparsity level $K_I$, $\mathcal{O}(dK_I^2)$ operations are sufficient for decoding with the modified OMP algorithm. Taking the same action over all $L$ blocks, the overall computational complexity becomes $\mathcal{O}(LdK_I^2)=\mathcal{O}(NK_I^2)$. 
\end{proof}


\subsection{Proof of Theorem \ref{theorem_1}}
\label{sec:theory_condition}

For a given realization of the measurement matrix, it has been proved in Lemma \ref{lemma:theory3} that $R=\mathcal{O}(K_B)$ and $T=\mathcal{O}(\log N)$ are sufficient for reliable block support recovery at the BS, and in Lemma \ref{lemma:theory4} that $R=\mathcal{O}(K_B)$ and $T=\mathcal{O}(K_I\log d)$ measurements are required for the in-block support recovery at the M2M devices. Taking the maximum value of both sides yields the sufficient guarantee on the performance regarding the number of measurements

\begin{equation}
    M = \mathcal{O}(\max\{K_B\log N, K_BK_I\log d\}).
\end{equation}

As $M$ is the dimension of the unique signature transmitted by an active M2M device, it is also an indication of the signal acquisition time as far as distributed schemes are concerned. 
Furthermore, since the algorithms for block support recovery at the BS requires $\mathcal{O}(N\log N)$ operations and the in-block support recovery at the device side requires $\mathcal{O}(NK_I^2)$ operations, the overall computational complexity is of $\mathcal{O}(N(K_I^2+\log N))$ for the whole detection process. 

However, if the signal is treated as a conventional $K$-sparse vector (where $K=K_BK_I$) as in \cite{Don06} without exploiting knowledge of the block-sparse structure, a sufficient condition for reliable signal recovery would be $M=\mathcal{O}(K\log N)=\mathcal{O}(K_BK_I\log N)$, and the computational complexity is of $\mathcal{O}(NK^2)$. Since $d\ll N$ and $K_I\ll K$, we can see that from the scaling point of view, less signal acquisition time is required by the proposed scheme with significantly reduced computational complexity.

\section{Numerical Results}
\label{sec:simulation}
We conduct extensive simulations to verify the performance of the proposed distributed device detection and resource allocation scheme. In our experiments, we take the number of M2M devices in the network to be $N=10000$ and they are partitioned into $L=100$ clusters with equal size $d=100$. The sparsity level $K=K_BK_I$ is set within the interval [10 100]. Thus, the target problem is to reconstruct the $K$-sparse binary vector of length $N$ from $M$ distributed measurements obtained via the measurement matrix $A\in\mathbb{R}^{M\times N}$ which is randomly drawn from the distribution $\mathcal{A}(R,T,L,d,\alpha)$ as defined in Section \ref{sec:algorithm_matrix}.

We first compare the performance of the proposed scheme to CS based approaches, among which we take the classic greedy algorithm OMP with random Gaussian measurements as the baseline. We assume that the signal to be reconstructed is treated as a conventional $K$-sparse vector, and centralized decoding is performed without exploiting knowledge of the block-sparse structure. Moreover, with respect to distributed processing as proposed in this work, the number of measurements $M$ is thus an indication of the acquisition time of the transmitted signal.
The sparsity level of the signal is set to be $K=20$, with block sparsity $K_B=4$ and in-block sparsity $K_I=5$, respectively. For each plot we average over 1000 pairs of realizations of the measurement matrix and the block-sparse signal. 

Figure \ref{fig_1} depicts the detection probability as a function of the signal acquisition time (in terms of $M$) under both noisy and noise-free conditions for the proposed scheme as well as for the baseline. The performance for the noisy case is evaluated by setting the \emph{signal-to-noise ratio} (SNR) to 5 dB in the simulations.
We can see that the proposed scheme outperforms the standard OMP algorithm with Gaussian measurements under both noisy and noisy-free conditions, where less acquisition time is required by the proposed scheme to achieve the same performance as OMP. Furthermore, the performance gap between the noisy and the noise-free case by the proposed scheme is comparatively smaller compared with the baseline, indicating more robustness in performance against noise.

Figure \ref{fig_2} plots the CDF of the average number of iterations for the signal recovery by standard OMP with random Gaussian measurements and the modified OMP in Table \ref{alg1}. It can be obviously shown that the modified OMP significantly outperforms the standard OMP algorithm with random Gaussian measurements by requiring much less number of iterations, since the dimension of the signal to be reconstructed is significantly reduced to the block size and the stopping criteria for running the algorithm is sternly confined to the cardinality of the detected in-block support.
Therefore, the computational complexity is drastically reduced in view of distributed processing by the proposed scheme.

    \begin{figure}[htb]
				\centering
					\includegraphics[width=3.2in]{./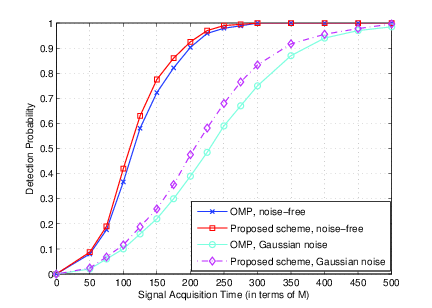}
        \caption{Performance comparison on the detection probability between the proposed scheme and standard OMP with random Gaussian measurements.}
        \label{fig_1}
    \end{figure}
		
		\begin{figure}[htb]
   		\centering
					\includegraphics[width=3.2in]{./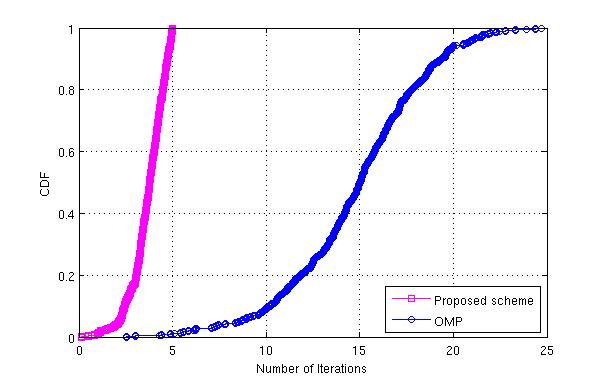}
        \caption{Comparison on the CDF of the average number of iterations for signal recovery between standard OMP with random Gaussian measurements and the modified OMP in Table \ref{alg1}.}
        \label{fig_2}
    \end{figure}

We also compare the performance of the proposed scheme with two classical access schemes, namely the LTE RA procedure \cite{3GPP321} and the conventional cluster-based approach \cite{JPL10} where a cluster head aggregates and forwards messages/requests for the rest of the devices in the cluster and initiates LTE RA procedure on behalf of the cluster members. We set the number of measurements $M$ to be 839 bit in the simulation -- the same as the length of Zadoff-Chu sequence \cite{3GPP321} used for the LTE RA procedure, thus the three schemes are running with the same signal acquisition time. The sparsity level $K=K_BK_I$ is again set within the range between 10 and 100. 

Figure \ref{fig_3} depicts the detection probability by the three schemes as a function of the number of active devices in the network, i.e., the sparsity level. It can be easily observed that the proposed scheme significantly outperforms the LTE RA procedure, thus achieving much better scalability with the increasing number of M2M devices deployed in the network and leading to more robust performance in the detection process. Moreover, the proposed scheme also achieves better performance than the cluster-based approach if the sparsity level is sufficiently large.

Figure \ref{fig_4} plots the access delay performance by the three schemes. Since the proposed scheme calls for much less coordination and signaling exchange between the M2M devices as well as to the infrastructure, the signaling overhead is substantially reduced and thus leading to significantly decreased access latency.

    \begin{figure}[htb]
				\centering
					\includegraphics[width=3.2in]{./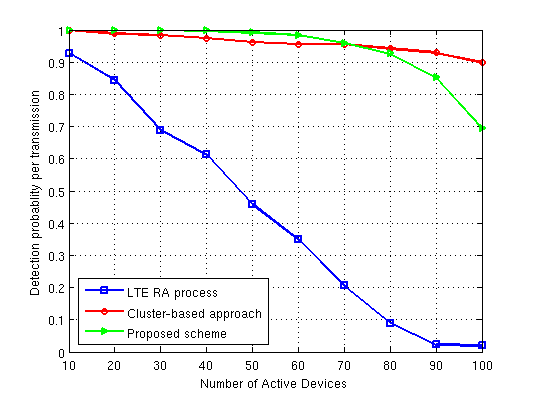}
        \caption{Performance comparison on the detection probability between the proposed scheme, LTE RA procedure and conventional cluster-based approach \cite{JPL10} with sparsity level within the interval [10 100].}
        \label{fig_3}
    \end{figure}
		
		\begin{figure}[htb]
   		\centering
					\includegraphics[width=3.2in]{./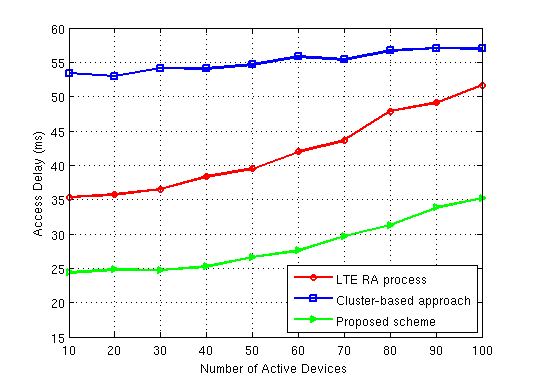}
        \caption{Performance comparison on the averaged access delay between the proposed scheme, LTE RA procedure and conventional cluster-based approach \cite{JPL10} with sparsity level within the interval [10 100].}
        \label{fig_4}
    \end{figure}

\section{Conclusion}
\label{sec:conclude}
This work utilizes the framework of CS for distributed detection of the network activation pattern to facilitate efficient resource allocation in large-scale M2M communication networks. The correlation in the device behaviors and the sparsity in the activation pattern of the M2M devices are fully exploited, thus mapping the target problem into a support recovery procedure for a particular block sparse signal -- with additional in-block structure -- in CS based applications. The detection techniques are mainly based on sketching and greedy algorithms, which inherit the virtues of low sample complexity and low computational complexity. By applying the distributed schemes and the optimization in the algorithms, it has been verified that a $K$-sparse binary vector $x\in\mathbb{B}^N$ with block sparsity $K_B$ and in-block sparsity $K_I$ over block size $d$ can be reliably reconstructed within an acquisition time of $\mathcal{O}(\max\{K_B\log N, K_BK_I\log d\})$ with computational complexity of $\mathcal{O}(N(K_I^2+\log N))$, which achieves a better scaling compared with conventional CS based approaches without exploiting the particular block-sparse structure. Furthermore, compared with conventional schemes including classic CS based approaches, the LTE RA procedure and cluster-based access mechanisms, the proposed scheme achieves better scalability with the increasing network size and sufficiently reduces the computational complexity as well as the signaling overhead, thus leading to more robust performance in the detection process, especially in terms of higher detection probability and reduced access delay.

However, there are still some limitations on the proposed approach, especially on the requirement of perfect synchronization during the acquisition phase and priori knowledge of the channel state by the M2M devices. Further extensions to relax these limitations will be investigated in future work.


%


\section*{Acknowledgment}

The authors would like to acknowledge partial support by the DFG in the context of grants JU 2795/2\&3 and STA 864/8-1.

\ifCLASSOPTIONcaptionsoff
  \newpage
\fi


\begin{thebibliography}{1}

\bibitem{CJZS16}
Y.~Chang, P.~Jung, C.~Zhou, and S.~Stanczak, ``Block compressed sensing based distributed device detection and resource allocation for M2M communications,'' \emph{IEEE International Conference on Acoustics, Speech and Signal Processing (ICASSP)}, 2016.
\bibitem{BEH12}
D.~Boswarthick, O.~Elloumi, and O.~Hersent, \emph{M2M Communications: A System Approach}, John Wiley \& Sons Ltd, 2012.
\bibitem{DBK10}
D.~Dicks, J.~Blau, and T.~Kridle, ``M2M on the rise: the technology perspective,” \emph{Heavy Reading Mobile Networks Insider}, July 2010.
\bibitem{OBHM13}
A.~Osseiran, V.~Braun, T.~Hidekazu, P.~Marsch, H.~Schotten, H.~Tullberg, M.~A.~Uusitalo, and M.~Schellman, ``The foundation of the mobile and wireless communication system for 2020 and beyond,'' \emph{Vehicular Technology Conference (VTC), IEEE}, June, 2013.
\bibitem{LTE11}
S.~Sesia, I.~Toufik, and M.~Baker, \emph{LTE-The UMTS Long Term Evolution, From Theory to Practice}, Second Edition, John Wiley \& Sons Ltd, 2011.
\bibitem{3GPP868}
3GPP, ``Study on RAN Improvements for Machine-Type Communications,'' \emph{TR 37.868 V11.0.0}, Sep. 2011.
\bibitem{3GPP321}
3GPP, ``Medium Access Control (MAC) Protocol Specification,'' \emph{TS 36.321 V8.2.0}, 2008.
\bibitem{TK12}
T.~Taleb and A.~Kunz, ``Machine type communications in 3gpp networks: potential, challenges and solutions,'' \emph{IEEE Communications Magazine}, 2012.
\bibitem{DBZ12}
M.~Dohler, D.~Boswarthick, and J.~A.~Zarate, ``Machine-to-machine in smart grids and smart cities,'' \emph{GLOBECOM Workshops (GCWkshps), IEEE}, December, 2012.
\bibitem{CRT06}
E.~Candes, J.~Romberg, and T.~Tao, ``Robust uncertainty principles: exact signal reconstruction from highly incomplete frequency information,'' \emph{IEEE Transactions on Information Theory}, vol.52, pp.489-509, 2006.
\bibitem{DET06}
D.~Donoho, M.~Elad, and V.~Templyakov, ``Stable recovery of sparse over-complete representations in the presence of noise,'' \emph{IEEE Transactions on Information Theory}, vol.52, pp.6-18, 2006.
\bibitem{Don06}
D.~Donoho, ``Compressed sensing,'' \emph{IEEE Transactions on Information Theory}, vol.52, pp.1289-1306, 2006.
\bibitem{SJS13}
J.~Schreck, P.~Jung, and S.~Stanczak, ''On channel state feedback for two-hop networks based on low rank matrix recovery,'' \emph{IEEE Int. Conf. on Commun. (ICC)}, 2013.
\bibitem{WBSJ15}
G.~Wunder, H.~Boche, T.~Strohmer, and P.~Jung, ``Sparse signal processing concepts for efficient 5G system design,'' \emph{IEEE Access}, vol. 3, pp. 195-208, 2015.
\bibitem{SSBHU09}
D.~A.~Schmidt, C.~Shi, R.~A.~Berry, M.~L.~Honig, and W.~Utschick, ``Distributed resource allocation 	schemes,'' \emph{IEEE Signal Processing Magazine}, vol. 26, pp.53-63, 2009.
\bibitem{CCF02}
M.~Charikar, K.~Chen, and M.~Farach-Colton, ``Finding frequent items in data streams,'' in \emph{Proceedings of the 29th International Colloquium on Automata, Languages and Programming}, 2002.
\bibitem{HB11}
J.~Haupt and R.~Baraniuk, ``Robust support recovery using sparse compressive sensing matrices,'' in \emph{Proc. 45th Annual Conf. on Information Sciences and Systems}, 2011.
\bibitem{TG07}
J.~A.~Tropp and A.~C.~Gilbert, ``Signal recovery from random measurements via orthogonal matching pursuit,'' \emph{IEEE Trans. Inform. Theory}, vol. 53, pp.4655-4666, 2007.
\bibitem{BD09}
T.~Blumensath and M. Davies, ``Iterative hard thresholding for compressed sensing,'' \emph{Applied and Computational Harmonic Analysis}, vol. 27. no. 3, pp. 265-274, 2009.
\bibitem{CR07}
E.~J.~Cand\'es and J.~Romberg, ``Sparsity and incoherence in compressive sampling,'' \emph{Inverse Problems}, vol. 23, pp. 96-99, 2007.
\bibitem{CT05}
E.~Candes and T.~Tao, ``Decoding by linear programming,'' \emph{IEEE Transactions on Information Theory}, vol.51, no.12, pp.4203-4215, 2005.
\bibitem{FR13}
S.~Foucart and H.~Rauhut, \emph{A Mathematical Introduction to Compressive Sensing}, Applied and Numerical Harmonic Analysis, Birkh\"auser, 2013.
\bibitem{KJ16}
R.~Kueng and P.~Jung, ``Robust nonnegative sparse recovery and the nullspace property of 0/1 measurements,'' \emph{arXiv:1603.07997v1 [cs.IT]}, 2016.
\bibitem{CD06}
Z.~Chen and J.~Dongarra, ``Condition numbers of Gaussian random matrices,'' \emph{SIAM Journal on Matrix Analysis and Applications}, vol. 27, no. 3, pp. 603-620, 2006.
\bibitem{CT06}
E.~Candes and T.~Tao, ``Near-optimal signal recovery from random projections: Universal encoding strategies?,'' \emph{IEEE Transactions on Information Theory}, vol. 52, no.4, pp. 5406-5425, 2006.
\bibitem{CW08}
E.~Candes and M.~Wakin, ``An introduction to compressive sampling,'' \emph{IEEE Signal Processing Magazine}, vol.25, no.2, pp.21-30, 2008.
\bibitem{CDS98}
S.~Chen, D.~Donoho and M.~Saunders, ``Atomic decomposition by basis pursuit,'' \emph{SIAM Journal on Scientific Computing}, vol.20, no.1, pp.33-61, 1998.
\bibitem{CT04}
E.~Candes and T.~Tao, ``Near optimal signal recovery from random projections and universal encoding strategies,'' \emph{IEEE Transactions on Information Theory}, 2004.
\bibitem{CD04}
S.~Chen, D.~Donoho, and M.~Sanunders, ``Atomic decomposition by basis pursuit,'' \emph{SIAM Journal on Scientific Computing}, pp. 33-61, 1998.
\bibitem{PC06}
D.~P.~Palomar and M.~Chiang, ``A tutorial on decomposition methods for network utility maximization,'' \emph{IEEE Journal on Selected Areas in
Communications}, vol.24, no.8, pp.1439-1451, August 2006.
\bibitem{LSS06}
X.~Lin, N.~B.~Shroff, and R.~Srikant, ``A tutorial on cross-layer optimization in wireless networks,'' \emph{IEEE Journal on Selected Areas
in Communications}, vol.24, no.8, pp.1452-1463, August 2006.
\bibitem{SWB09}
S.~Stanczak, M.~Wiczanowski, and H.~Boche, \emph{Fundamentals of Resource Allocation in Wireless Networks}, volume 3 of \emph{Foundations in Signal Processing, Communications and Networking}, Springer, Berlin, 2009.
\bibitem{TPC06}
C.~W.~Tan, D.~P.~Palomar, and M.~Chiang, ``Distributed optimization of coupled systems with applications to network utility maximization,''
\emph{in Proc. of IEEE International Conference on Acoustics, Speech and Signal Processing (ICASSP)}, 2006.
\bibitem{BJKJ15}
D.~Bajovic, D.~Jakovetic, N.~Krejic, and N.~K.~Jerinkic, ``Newton-like method with diagonal correction for distributed optimization,'' \emph{arXiv preprint arXiv:1509.01703, 2015}.
\bibitem{BT89}
D.~P.~Bertsekas and J.~N.~Tsitsiklis, ``Parallel and distributed computation: numerical methods,'' \emph{Prentice Hall}, 1989.
\bibitem{JT04}
R.~Johari and J.~N.~Tsitsiklis, ``Efficiency loss in a network resource allocation game,'' \emph{Mathematics of Operations Research}, vol.29, no.3,
pp.407-435, August 2004.
\bibitem{MMHBR09}
R.~Menon, A.~B.~MacKenzie, J.~E.~Hicks, R.~M.~Buehrer, and J.~H.~Reed, ``A game-theoretic framework for interference avoidance,'' \emph{IEEE Trans. Commun.}, vol.57, no.4, pp.1087-1098, Apr. 2009.
\bibitem{CEA11}
E.~N.~S.~Cachan, B.~Extension, and A.~Amokrane, ``Congestion control in the context of machine type communications in 3GPP LTE networks,'' \emph{Evolution}, 2011.
\bibitem{JPL10}
K.~R.~Jung, A.~Park, and S.~Lee, ``Machine-type-communication (MTC) device grouping algorithm for congestion avoidance of MTC oriented LTE network,'' \emph{Communications in Computer and Information Science}, March, 2010.
\bibitem{MLH09}
J.~Meng, H.~Li, and Z.~Han, ``Sparse event detection in wireless sensor networks using compressive sensing,'' \emph{43rd Annual Conference on CISS}, 2009.
\bibitem{SD11}
H.~F.~Schepker and A.~Dekorsy, ``Sparse multi-user detection for CDMA transmission using greedy algorithms,'' \emph{8th International Symposium on Wireless Communication Systems (ISWCS)}, 2011.
\bibitem{LT10}
Q.~Ling and Z.~Tian, ``Decentralized sparse signal recovery for compressive sleeping wireless sensor networks,'' \emph{IEEE Transactions on Signal Processing}, vol. 58, pp. 2816-3827, 2010.
\bibitem{ZZJZG15}
Z.~Zheng, C.~Zhu, B.~Jiang, W.~Zhong, and X.~Gao, ``Statistical channel state information acquisition for massive MIMO communications,'' \emph{Wireless Communications and Signal Processing (WCSP)}, 2015.
\bibitem{WQDCSC15}
Z.~Wang, C.~Qian, L.~Dai, J.~Chen, C.~Sun, and S.~Chen, ``Location-based channel estimation and pilot assignment for massive MIMO systems,'' \emph{IEEE International Conference on Communication Workshop (ICCW)}, 2015.
\bibitem{KJGK10}
F.~Kaltenberger, H.~Jiang, M.~Guillaud, and R.~Knopp, ``Reletive channel reciprocity calibration in MIMO/TDD systems,'' \emph{2010 Future Network \& Mobile Summit}, pp. 1-10, 2010.
\bibitem{SRBF01}
M.~Stege, T.~Ruprich, M.~Bronzel, and G.~Fettweis, ``Channel estimation using long-term spatial channel characteristics,'' in proceeding of \emph{WPMC}, 2001.
\bibitem{KMW01}
Z.~Kosti\'c, I.~Mari\'c, and X.~Wang, ``Fundamentals of dynamic frequency hopping in cellular systems,'' \emph{IEEE Journal on Selected Areas in Communications}, vol. 19, no. 11, pp. 2254-2266, 2001.
\bibitem{DMM09}
D.~L.~Donoho, A.~Maleki, and A.~Montanari, ``Message passing algorithms for compressed sensing,'' \emph{arXiv:0907.3574 [cs.IT]}, 2009.
\bibitem{BM11}
M.~Bayati and A.~Montanari, ``The dynamics of message passing on dense graphs, with applications to compressed sensing,'' \emph{arXiv:1001.3448 [cs.IT]}, 2011.
\bibitem{Chernoff52}
H.~Chernoff, ``A measure of asymptotic efficiency for tests of a hypothesis based on the sum of observations,'' \emph{Annals of Mathematical Statistics}, vol.23, pp.493-507, 1952.
\bibitem{Taylor52}
A.~E.~Taylor, ``L'Hospital's rule,'' \emph{Amer. Math. Monthly}, vol.59, pp.20-24, 1952.
\bibitem{Krantz04}
S.~G.~Krantz, \emph{A Handbook of Real Variables. With Applications to Differential Equations and Fourier Analysis}, Boston, MA: Birkhauser Boston Inc., 2004, 
\bibitem{LCSR09}
C.~E.~Leiserson, T.~H.~Cormen, C.~Stein, and R.~Rivest, \emph{Introduction to Algorithms (3rd ed.)}, MIT Press and McGraw-Hill, 2009.
\bibitem{CW14}
L.~H.~Chang and J.~Y.~Wu, ``An improved RIP-based performance guarantee for sparse signal recovery via orthogonal matching pursuit,'' \emph{IEEE Trans. Inf. Theory}, vol. 60, no. 9, pp. 707-710, 2014.
\bibitem{WZWTM16}
J.~Wen, Z.~Zhou, J.~Wang, X.~Tang, and Q.~Mo, ``A sharp condition for exact support recovery of sparse signals with orthogonal matching pursuit,'' \emph{2016 IEEE International Symposium on Information Theory}, 2016.

\end{thebibliography}
\end{document}